\newif\ifconf
\DeclareMathOperator*{\argmax}{arg\,max}
\DeclareMathOperator*{\argmin}{arg\,min}
\DeclareMathOperator*{\esssup}{ess\,sup}
\DeclareMathOperator {\val}  {val}
\DeclareMathOperator {\Var}  {Var}
\newcommand {\set}   [1] {\left\{ #1 \right\}}
\newcommand {\brc}   [1] {\left(#1\right)}
\newcommand {\Exp}       {\mathbb{E}}
\newcommand {\Prob}  [1] {\Pr \brc{#1 }}
\newcommand {\E}     [1] {\Exp\left[#1\right]}
\newcommand {\Varr}  [1] {\Var \left[#1\right]}
\newcommand{\given}{\mid}
\newcommand {\bbZ}    {\mathbb{Z}}
\newcommand {\bbR}    {\mathbb{R}}
\newcommand {\calP}   {{\cal{P}}}
\newcommand {\calQ}  {{\cal{Q}}}
\newcommand {\calF}   {{\cal{F}}}
\newcommand {\calI}    {{\cal I}}
\newcommand {\calL}   {{\cal{L}}}
\newcommand {\Tau}    {{\cal{T}}}
\newcommand {\ord}    {{\cal{O}}} 
\newtheorem{theorem}{Theorem}
\newtheorem{lemma}[theorem]{Lemma}
\newtheorem{claim}[theorem]{Claim}
\newtheorem{corollary}[theorem]{Corollary}
\newtheorem{definition}[theorem]{Definition}
\newtheorem{remark}[theorem]{Remark}
\title{Satisfiability of Ordering CSPs Above Average\\ Is Fixed-Parameter Tractable}
\author{Konstantin Makarychev \\Microsoft Research \and Yury Makarychev
\\TTIC \and Yuan Zhou \\ MIT}
\date{}
\begin{document}
\maketitle

\begin{abstract}
We study the satisfiability of ordering constraint satisfaction problems (CSPs) above average. We prove the conjecture of Gutin, van Iersel, Mnich, and Yeo that the satisfiability above average of ordering CSPs of arity $k$ is fixed-parameter tractable for every $k$. Previously, this was only known for $k=2$ and $k=3$. We also generalize this result to more general classes of CSPs, including CSPs with predicates defined by linear inequalities.

To obtain our results, we prove a new Bonami-type inequality for the Efron---Stein decomposition. The inequality applies to functions defined on arbitrary product
probability  spaces. In contrast to other variants of the Bonami Inequality, it does not depend on the mass of the smallest atom in the probability space. We believe that this inequality is of independent interest.
\end{abstract}

\section{Introduction}

In this paper, we study the satisfiability of ordering constraint satisfaction problems (CSPs) above the average value. An ordering $k$-CSP
is defined by a set of variables $V = \set{x_1,\dots, x_n}$ and a set of constraints $\Pi$. Each constraint $\pi\in \Pi$
is a disjunction of clauses of the form $x_{\tau_1} < x_{\tau_2} < \dots < x_{\tau_k}$ for some distinct variables $x_{\tau_1},\dots, x_{\tau_k}$
from a $k$-element subset $V_{\pi}\subset V$. A linear ordering $\alpha$ of variables $x_1,\dots, x_n$ satisfies a constraint $\pi$ if one of the clauses in the disjunction agrees with the linear ordering $\alpha$. The goal is to find an assignment $\alpha$ that maximizes the number of satisfied constraints.

A classical example of an ordering CSP is the Maximum Acyclic Subgraph problem,  in which constraints are of the form ``$x_i < x_j$'' (the problem has arity 2). Another well-known example is the Betweenness problem, in which constraints are of the form ``$(x_i < x_j <x_k)$ or $(x_k < x_j <x_i)$'' (the problem has arity 3). Both problems are $NP$-hard and cannot be solved exactly in polynomial--time unless $P = NP$~\cite{Karp72, Opatrny79} (see also \cite{Seymour,CS,M-betw}).

There is a trivial approximation algorithm for ordering CSPs as well as other constraint satisfaction problems: output a random linear ordering of variables $x_1,\dots, x_n$ (chosen uniformly among all $n!$ linear orderings). Say, if each constraint is just a clause on $k$ variables, this algorithm satisfies each clause with probability $1/k!$ and thus satisfies a $1/k!$ fraction of all constraints in expectation.
In 1997, H\aa{}stad~\cite{Hastad} showed that for some regular (i.e., non-ordering) constraint satisfaction problems the best approximation algorithm is the \textit{random assignment} algorithm. His work raised
the following question: for which CSPs are there non-trivial approximation algorithms and
for which CSPs is the best approximation algorithm the random assignment?
This question has been extensively studied in the literature. Today, there are
many known classes of constraint satisfaction problems that do not admit non-trivial approximations
assuming the Unique Games or $P\neq NP$ conjectures (see e.g \cite{Hastad, AM, GR, Chan}).
There are also many constraint satisfaction problems for which we know non-trivial approximation algorithms.
Surprisingly, the situation is very different for ordering CSPs:
Guruswami, H\aa{}stad,
Manokaran, Raghavendra, and Charikar~\cite{GHMRC} showed that
\textit{all} ordering $k$-CSPs do not admit non-trivial approximation assuming the Unique Games Conjecture.

A similar question has been actively studied from the fixed-parameter tractability
perspective\footnote{We refer the reader to an excellent survey of results in this area by Gutin and Yeo~\cite{GY}.}~\cite{AGKSY, CFGJRTY, CGJRS,GKSY,KW,MR,RS}: Given an instance of a CSP, can we decide whether
$OPT\geq AVG+t$ for a fixed parameter $t$? Here, $OPT$ is the value of the optimal solution for the instance,
and $AVG$ is the expected value on a  random assignment.
In 2011, Alon, Gutin, Kim, Szeider, and Yeo~\cite{AGKSY} gave the affirmative answer to this question for all (regular, non-ordering) $k$-CSPs
with a constant size alphabet. In~\cite{GKSY, GIMY, GKMY}, Gutin et al. extended this result to
2-arity and 3-arity ordering CSPs, and conjectured~\cite{GIMY} that the satisfiabilty above average is fixed-parameter tractable for ordering CSPs of any arity $k$.
Below we state the problem formally.

\begin{definition}[Satisfiability of Ordering CSP Above Average]\label{def:above-avg}
Consider an instance $\cal I$ of arity $k$ and a parameter $t$. Let $OPT = OPT({\cal I})$ be the number of the constraints satisfied by the optimal solution, and
$AVG = AVG({\cal I})$ be the number of constraints satisfied in expectation by a random solution. We need to decide whether $OPT \geq AVG + t$.
\end{definition}
\begin{definition}
A problem with a parameter $t$ is fixed-parameter tractable	if there exists an
algorithm for the problem with running time $g(t) poly(n)$, where $g(t)$ is an arbitrary function of $t$, $poly$ is a fixed polynomial (independent of $t$),
and $n$ is the size of the input.
\end{definition}

In this paper, we prove the conjecture of Gutin et al.~\cite{GIMY} and show that the satisfiability above average of any ordering CSP of any arity $k$ is fixed-parameter tractable.
\begin{theorem}\label{thm:main-intro}
There exists a deterministic algorithm that given an instance $\calI$ of an ordering $k$-CSP on $n$ variables and a parameter $t$, decides whether
$OPT(\calI) \geq AVG(\calI) + t$ in time $g(t)poly_k(n)$, where $g$ is a function of $t$, $poly_k$ is a polynomial of $n$ ($g$ and $poly_k$ depend on $k$).
If $OPT(\calI) \geq AVG(\calI) + t$, then the algorithm also outputs an assignment satisfying at least $AVG(\calI) + t$ constraints.
\end{theorem}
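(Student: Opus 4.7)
The plan is to embed the combinatorial problem into a continuous product space and reduce the above-average question to a win--win dichotomy driven by the size of a random fluctuation. I would associate each variable $x_i$ with an independent uniform random variable $X_i \in [0,1]$ and let the ordering $\alpha(X)$ be the permutation that sorts $X_1,\ldots,X_n$. For an instance $\calI$ with constraint set $\Pi$, set
\[
F(X) = \sum_{\pi \in \Pi} \mathbf{1}[\alpha(X) \text{ satisfies } \pi] - AVG(\calI),
\]
so $\E{F} = 0$ and $\max_X F(X) = OPT(\calI) - AVG(\calI)$; the decision problem reduces to deciding whether $\max_X F(X) \ge t$. Since each constraint depends on at most $k$ variables, $F$ admits an Efron--Stein decomposition $F = \sum_{|S|\le k} F_S$ with $\|F\|_2^2 = \sum_S \|F_S\|_2^2$. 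I would then fix a threshold $\tau = \tau(t,k)$ and split into two cases according to whether $\|F\|_2 \ge \tau$ or not.

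In the large-fluctuation case $\|F\|_2 \ge \tau$, I would apply the new Bonami-type inequality advertised in the abstract to obtain $\|F\|_4 \le C_k \|F\|_2$ for a constant $C_k$ depending only on $k$. Paley--Zygmund applied to $F^2$ then yields $\Prob{|F| \ge \tfrac{1}{2}\|F\|_2} \ge c_k > 0$; a short symmetrization argument combined with $\E{F}=0$ upgrades this to $\Prob{F \ge c'_k\|F\|_2}\ge c''_k$, so choosing $\tau$ with $c'_k\tau \ge t$ immediately forces $OPT\ge AVG+t$, and a witnessing assignment can be produced deterministically via the method of conditional expectations along the Efron--Stein basis. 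In the small-fluctuation case $\|F\|_2 < \tau$, I would use $\|F\|_2^2=\sum_S\|F_S\|_2^2$ together with the integrality of the indicators to identify a bounded set of ``heavy'' variables carrying essentially all the $\ell_2$ mass of $F$, then discard or aggregate the remaining constraints to produce a kernel of size $f(t,k)$ which is solved by brute-force enumeration in time $g(t,k)$.

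The main obstacle is the hypercontractive step in Case A. Standard Bonami--Beckner hypercontractivity on product probability spaces gives a bound of the form $\|F\|_4 \le (1/\sqrt{\mu_{\min}})^{O(k)}\,\|F\|_2$ where $\mu_{\min}$ is the mass of the smallest atom; in the continuous $[0,1]$ encoding used here $\mu_{\min}=0$, so this bound is vacuous, and workarounds that discretize $[0,1]$ into $n^{O(1)}$ levels introduce factors that depend on $n$ and would destroy fixed-parameter tractability. The crux of the proof is therefore to establish a Bonami-type inequality for the Efron--Stein decomposition whose constant depends only on the degree $k$ and not on the atom structure of the underlying space. Once that inequality is in hand the decide-or-kernelize outline above goes through, and the same blueprint should extend to the more general CSP classes (including predicates defined by linear inequalities) promised in the abstract, since only the arity bound and the product structure of the encoding are used.
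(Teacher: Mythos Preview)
Your overall architecture---embed into a continuous product space, take the Efron--Stein decomposition, and run a win--win on the size of the fluctuation---matches the paper. The gap is in your small-fluctuation branch. You write that from $\|F\|_2 < \tau$ and ``integrality of the indicators'' you will locate a bounded set of heavy variables and then discard or aggregate the rest. But smallness of $\|F\|_2$ does not by itself bound the number of variables on which $F$ depends, and discarding ``light'' constraints does not preserve $OPT$ exactly. The paper's mechanism here is a structural lemma: it shows (Theorem~\ref{thm:ES-ordering-predicate}, Corollary~\ref{cor:ES-objective}) that every $\Phi_S$ has the form $\sum_\tau \phi_\tau \cdot q_{S,\tau}/(2^k k!)$ with each $q_{S,\tau}$ an integer-coefficient polynomial of bounded degree, so $\Phi_S$ lies in a fixed lattice inside a finite-dimensional function space depending only on $k$. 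A compactness argument (Claim~\ref{claim:compact}, Lemma~\ref{lem:beta}) then gives the dichotomy: either $\Phi_S=0$ or $\|\Phi_S\|_2^2\ge\beta_k>0$. With that in hand the paper branches directly on $V'=\bigcup_{\Phi_S\neq 0}S$: if $|V'|$ is large then $\Var[\Phi]\ge |V'|\beta_k/k$ is large and the hypercontractive route kicks in; if $|V'|$ is small then $\Phi$ literally depends on only those $O_k(t^2)$ variables and the restriction to $V'$ is an \emph{exact} kernel---no discarding needed. Your proposal is missing this lattice/lower-bound step, and without it neither the variable bound nor the exact kernel goes through.

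Two smaller points. First, the Bonami inequality the paper proves (Theorem~\ref{thm:bonami-for-ES}) is not unconditional: it requires the local hypothesis $\|f_S\|_4^4\le C\|f_S\|_2^4$ for each $S$, and this is again verified via the same finite-dimensional structure of the $\Phi_S$ (Lemma~\ref{lem:bound-on-C}); you should not expect a degree-only constant for arbitrary $L_2$ functions on $[0,1]^n$. Second, your plan to extract a witnessing assignment by conditional expectations ``along the Efron--Stein basis'' is problematic over a continuous domain and it is unclear how to make it deterministic $\mathrm{poly}(n)$-time; the paper instead enumerates the support of a $4k$-rankwise independent permutation (obtained via Alon--Lovett), which has size $n^{O(k)}$, and shows (Lemma~\ref{lem:rankwise}) that the second and fourth moments of $\val_{\calI}-AVG$ are preserved under this restricted distribution.
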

Furthermore, we prove that the problem has a \emph{kernel} of size $O_k(t^2)$.

\medskip

\noindent \textbf{Techniques.} Let us examine approaches used previously for ordering CSPs.
The algorithms of Gutin et al.~\cite{GKSY, GIMY, GKMY} work by applying a carefully chosen
set of reduction rules to ordering CSPs of arity 2 and 3. These rules heavily depend on the structure of
$2$ and $3$ CSPs. Unfortunately, the structure of ordering
CSPs of higher arities is substantially more complex. Here is a quote from~\cite{GIMY}: \emph{``it appears technically
very difficult to extend results obtained for arities $r = 2$ and $3$ to $r > 3$}.''
In this paper, we do not use such reductions.

The papers~\cite{CMM, GZ, Mak} use an alternative approach to get an advantage over the random assignment
for special families of ordering CSPs. They first reduce the ordering $k$-CSP to a regular $k$-CSP with a constant size alphabet, and
then work with the obtained regular $k$-CSP. However, this reduction, generally, does not preserve the value of the CSP. So if for the original
ordering CSP instance $\calI$ we have $OPT({\cal I})\geq AVG({\cal I})+t$, then for the new instance $\calI'$ we may
have $OPT(\calI') \ll AVG(\calI)+t$ (we note that $AVG(\calI') = AVG(\cal I)$).
In this paper, we do not use this reduction either.

Instead, we treat all ordering CSPs as CSPs with the continuous domain: Our goal is to arrange
all variables on the interval $[-1,1]$ so as maximize the number of satisfied constraints.
The arrangement of variables uniquely  determines their order. Moreover, if we
independently assign random values from $[-1,1]$ to variables $x_i$, then the induced ordering on $x_i$'s will be
uniformly distributed among all $n!$ possible orderings.
Thus, our reduction preserves the values of $OPT$ and $AVG$. However, we can no longer
apply Fourier analytic tools used previously in~\cite{AGKSY, GZ, Mak}. We cannot use the
(standard) Fourier analysis on $[-1,1]^n$, since we have no control over the Fourier coefficients
of the functions we  need to analyze. Instead, we work with the Efron---Stein decomposition~\cite{ES}
(see Sections~\ref{sec:proof-overview} and \ref{sec:EfronStein}). We show that all terms in
the Efron---Stein decomposition have a special form. We use this fact to prove that an
ordering $k$-CSP that depends on many variables must have a large variance.
Specifically, we show that
if a $k$-CSP instance depends on $C_k t^2$ variables, then the standard deviation of its value from the mean
(on a random assignment) is greater than $c_k t$ (for some $C_k$ and $c_k\gg1$). As is, this claim does not imply that
$OPT \geq AVG + t$ since for some assignments the value may be substantially less than $AVG-t$.
To finish the proof of the main result, we prove a new hypercontractive inequality, which is
an analog of the Bonami Lemma~\cite{Bonami}. This inequality is one of the main
technical contributions of our paper.

\begin{theorem}[Bonami Lemma for Efron---Stein Decomposition]\label{thm:bonami-for-ES}
Consider $f \in L_2(\Omega^n, \mu^n)$. Let $f = \sum_{S} f_S$ be the Efron---Stein decomposition of $f$. Denote the degree of the decomposition by $d$.
Assume that for every set $S$,
\begin{equation}\label{eq:Bonami-Fourth-Moment}
\|f_S\|_4^4 \equiv \E{f_{S}^4} \leq C \|f_S\|_2^4 \equiv C \E{f_{S}^2}^2.
\end{equation}
Then
\begin{equation}\label{eq:Bonami-Bound}
\|f\|_4^4 \equiv \E{f(X_1,\dots, X_n)^4} \leq 81^d C \|f\|_2^4 \equiv 81^d C  \E{f(X_1,\dots, X_n)^2}^2.
\end{equation}
\end{theorem}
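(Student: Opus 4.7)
The plan is to prove the following strengthened inductive statement, parameterized by a set $R \subseteq [n]$: if the Efron--Stein components $f_S$ of $f$ are supported on index sets $S$ with $R \subseteq S$ and $|S| \leq d$, then $\|f\|_4^4 \leq 81^{d-|R|} C \|f\|_2^4$. Taking $R = \emptyset$ recovers the theorem. I will induct on $m := n - |R|$, the number of ``free'' coordinates outside $R$.

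The base case $m = 0$ forces $R = [n]$, so the only admissible index is $S = R$, giving $f = f_R$; the hypothesis on $f_R$ yields the bound directly (note $81^{d-|R|} \geq 1$ since $|R| \leq d$). For the inductive step, I pick a free coordinate $i \in [n] \setminus R$ and split $f = g + h$, where $g = \sum_{S \not\ni i} f_S$ and $h = \sum_{S \ni i} f_S$. Then $g$ lives on $\Omega^{n-1}$ with unchanged $R$, and induction at parameter $m-1$ gives $\|g\|_4^4 \leq 81^{d-|R|} C \|g\|_2^4$. Meanwhile $h$ stays on $\Omega^n$ but all of its index sets contain $R' := R \cup \{i\}$, so induction at parameter $m-1$ yields the sharper bound $\|h\|_4^4 \leq 81^{d-|R|-1} C \|h\|_2^4$; this extra factor of $81$ for $h$ is what will close the argument.

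Next I expand $\E{f^4} = \E{g^4} + 4\E{g^3 h} + 6\E{g^2 h^2} + 4\E{g h^3} + \E{h^4}$. The term $\E{g^3 h}$ vanishes because $g$ is independent of $x_i$ while every $f_S$ appearing in $h$ is mean-zero in $x_i$, so integrating first in $x_i$ kills it. For the remaining cross terms I apply H\"older to bound $\E{g^2 h^2} \leq \|g\|_4^2 \|h\|_4^2$ and $\E{g h^3} \leq \|g\|_4 \|h\|_4^3$, substitute the two inductive bounds, and use the orthogonality of the Efron--Stein decomposition, $\|f\|_2^2 = \|g\|_2^2 + \|h\|_2^2$. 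Setting $a = \|g\|_2$ and $b = \|h\|_2$ and factoring out $81^{d-|R|-1} C$, the desired inequality reduces to the numerical claim $12 a b^3 \leq 108 a^2 b^2 + 80 b^4$, which follows immediately from the negative discriminant of the quadratic $108 a^2 - 12 a b + 80 b^2$ in $a$.

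The main obstacle, and the reason the classical Bonami induction does not transfer verbatim, is the cross term $\E{g h^3}$: in the $\{-1,1\}^n$ setting this term vanishes by parity (since there $h$ is antisymmetric in $x_n$), but over a general product space $h$ is only mean-zero in $x_i$ and not antisymmetric. Carrying along the set $R$ in the induction lets me apply the inductive hypothesis to $g$ and $h$ with asymmetric bounds (the bound on $h$ is tighter by a factor of $81$), and the factors $81^{1/2} = 9$ and $81^{1/4} = 3$ that arise from the H\"older estimates provide exactly the slack needed to absorb the nonvanishing $\E{g h^3}$ contribution.
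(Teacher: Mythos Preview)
Your proof is correct and takes a genuinely different route from the paper's. The paper does not induct directly on coordinates. Instead, it introduces an auxiliary i.i.d.\ sequence $Z_1,\dots,Z_n$ where each $Z_i$ equals $3$ with probability $1/4$ and $-1$ with probability $3/4$, and defines the multilinear ``majorizing'' polynomial
\[
M_f(Z_1,\dots,Z_n) \;=\; \sum_{S} 3^{-|S|/2}\,\|f_S\|_2 \prod_{i\in S} Z_i .
\]
A direct calculation gives $\E{M_f^2}=\E{f^2}$, and a term-by-term Cauchy--Schwarz comparison (using $\E{Z^m}^{1/m}\geq \sqrt{3}$ for $m=2,3,4$ and the assumed bound $\|f_S\|_4^4\le C\|f_S\|_2^4$) gives $\E{f^4}\le C\,\E{M_f^4}$. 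Finally, writing $Z_i = Y_i'+Y_i''+Y_i'Y_i''$ with Bernoulli $Y_i',Y_i''$ turns $M_f$ into a degree-$2d$ polynomial in $\pm 1$ variables, to which the standard Bonami lemma applies, yielding $\E{M_f^4}\le 9^{2d}\E{M_f^2}^2=81^d\E{M_f^2}^2$.

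Your argument is more elementary and fully self-contained: it does not invoke the classical Bonami lemma as a black box, and your strengthened inductive hypothesis (carrying the forced set $R$) is precisely what replaces the parity cancellation of $\E{gh^3}$ that one gets for free over the Boolean cube. The paper's comparison-polynomial method, on the other hand, isolates the hypercontractive content into a single reduction step and makes transparent why the specific constant $81^d$ appears (it is $9^{2d}$ from doubling the degree). Both arguments land on the same constant; yours is arguably the more robust template if one wanted to vary the exponent or the moment assumption, while the paper's is conceptually cleaner once the distribution $\mathcal Z$ is in hand.
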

We note that hypercontractive inequalities have been extensively studied under various settings
(see e.g., \cite{Talagrand94, DS96, Wolf07, MOS}). However, all of them depend on
the \textit{mass of the smallest atom in the probability space}. In our case, the smallest atom is polynomially small in $n$,
which is why we cannot apply known hypercontractive inequalities.
This is also the reason why we need an extra condition~(\ref{eq:Bonami-Fourth-Moment}) on the function $f$.
Condition (\ref{eq:Bonami-Fourth-Moment}) is a ``local'' condition in
the sense that all expectations in~(\ref{eq:Bonami-Fourth-Moment}) are over at most $d$ variables for every set $S$.
Consequently, as we will see below, it is very easy to verify that it holds in many cases
(in contrast to (\ref{eq:Bonami-Bound}), which is very difficult to verify directly).
Note also that  condition (\ref{eq:Bonami-Fourth-Moment}) is necessary --- if it is not satisfied, then the ratio $\|f\|_4/\|f\|_2$ can
be arbitrarily large even for $d=1$.

\medskip

\noindent\textbf{Extensions.}
Once we assume that the domain of every variable is the interval $[-1,1]$, we might be tempted
to write more complex constraints than before such as
\emph{``the average of $x_1$, $x_2$ and $x_3$ is at most
$x_4$''}, or \emph{``$x_1$ lies to the left of the midpoint between $x_2$ and $x_3$''},
or \emph{``$x_1$ is closer to $x_2$ than to $x_3$''}.
 Each of these constraints can be written as a system of linear inequalities or a disjunction of clauses,
 each of which is given by a system of linear inequalities. For instance,
  \emph{``$x_1$ lies to the left of the midpoint between $x_2$ and $x_3$''}
  can be written as  $2x_1 - x_2 -  x_3 < 0$.
In Section~\ref{sec:PPP}, we extend our results to CSPs in which every constraint is a disjunction of clauses, each of which is
a ``small'' linear program (LP).
Namely, each constraint should have arity at most $k$, only variables that a constraint depends on should appear in the LPs that define it,
and all LP coefficients must be integers in the range $\{-b,\dots,b\}$ (for a fixed $b$).
We call this new class of CSPs $(k,b)$-LP CSPs.

\begin{definition}\label{def:LP-CSP}
A $(k,b)$-LP CSP is defined by a set of variables $V = \set{x_1,\dots, x_n}$ taking values in the interval $[-1,1]$
and a set of constraints $\Pi$. Each constraint $\pi\in \Pi$ is a disjunction of clauses of the form
$Ax < c$, where $A$ is a matrix with integer coefficients in the range $[-b,b]$;
$c$ is a vector with integer coefficients in the range $[-b,b]$; the indices of  non-zero columns of the matrix
$A$ lie in the set $V_{\pi}$ of size $k$ (the set $V_{\pi}$ is the same for all clauses in $\pi$).
The goal is to assign distinct real values to variables $x_i$ so as to maximize the number of satisfied constraints.
\end{definition}

In fact, we extend our results to a much more general class of valued CSPs -- all CSPs whose
predicates lie in a lattice of functions with some natural properties
(see Sections~\ref{sec:general-framework} and~\ref{sec:PPP}  for details);
but we believe that the subclass of $(k,b)$-LP CSPs is the most natural example of CSPs in the class.
Observe that every ordering $k$-CSP is a $(k,1)$-LP CSP since we can
write each clause $x_1< x_2< \dots < x_k$ as the system of linear equations
$x_i - x_{i+1} < 0$ for $i\in \{1,\dots,k-1\}$. Similarly, every $k$-CSP on
a finite domain $\{1,\dots, d\}$ is equivalent to a $(k,d)$-LP CSP. The reduction works as follows:
We break the interval $[-1,1]$ into $d$ equal subintervals $((2j-d - 2)/d, (2j-d)/d )$
and map every value $j$ to the $j$-th interval. Then, we replace every
condition $x_i = j$ with the equation $x_i \in ((2j-d-2)/d, (2j-d)/d)$
which can be written as $-d x_i < -(2j-d-2)$ and $d x_i < (2j-d)$.

\medskip

\noindent \textbf{Overview.} In the next section we give an informal overview of the proof.
We formally define the problem and describe the Efron---Stein decomposition in Section~\ref{sec:prelim}.
Then, in Section~\ref{sec:ES-ordering-CSP}, we prove several claims about the Efron---Stein decomposition
of ordering CSPs. We derive the main results (Theorem~\ref{thm:main-intro} and Theorem~\ref{thm:main}) in
Section~\ref{sec:proof-main-thm}. Finally, we prove the Bonami Lemma for the Efron---Stein decomposition
in Section~\ref{sec:Bonami}. We generalize our results to all CSPs with a lattice structure in Section~\ref{sec:general-framework}
and show that $(k,b)$-LP CSPs (as well as more general ``piecewise polynomial'' CSPs)
have a lattice structure in Section~\ref{sec:PPP}.

\section{Proof Overview}\label{sec:proof-overview}
Our high-level approach is similar to that developed by Alon et al.~\cite{AGKSY} and Gutin et al.~\cite{GKSY, GKMY, GIMY}.
As in~\cite{GKSY, GKMY, GIMY}, we design an algorithm that given an instance $\cal I$ of an ordering CSP does the following:
\begin{itemize}
\item It either finds a kernel (another instance of the ordering CSP) $\cal K$ on $O(t^2)$ variables such that $OPT({\cal I}) = OPT({\cal K})$ and $AVG({\cal I}) = AVG({\cal K})$.
Then we can decide whether $OPT \geq AVG + t$ by trying out all possible permutations of variables that $\cal K$ depends on in time $\exp(O(t^2 \log t))$.
\item Or it certifies that $OPT({\cal I}) \geq AVG({\cal I}) + t$.
\end{itemize}
To this end, we show that either $\cal I$ depends on at most $O(t^2)$ variables or the variance of $\val_{\cal I} (\alpha)$ is $\Omega(t^2)$ (where $\alpha$ is chosen uniformly at random). In the former case, the restriction of $\cal I$ to the variables it depends on is the desired kernel of size $O(t^2)$. In the latter case, $OPT \geq AVG + t$.
Though our approach resembles that of~\cite{GKSY, GKMY, GIMY} at the high level, we employ very different techniques to prove our results.

We extensively use Fourier analysis and, specifically, the Efron---Stein decomposition. Fourier analysis is a very powerful tool, which works especially well with product spaces. The space of feasible solutions of an ordering CSP is not, however, a product space --- it is a discrete space
that consists of $n!$ linear orderings of variables $x_1,\dots, x_n$. To overcome this problem, we define ``continuous solutions'' for an ordering
CSP (see Section~\ref{sec:csp-over-product-space}). A solution  is an assignment of real values in $[-1,1]$ to variables $x_1,\dots, x_n$; that is, it is a point in the product space $[-1,1]^n$. Each continuous solution defines a combinatorial solution $\alpha$ in a natural way: $\alpha$ orders variables $x_1,\dots, x_n$ according to the values assigned to them (e.g., if we assign values $-0.5$, $-0.9$ and $0.5$ to $x_1$, $x_2$ and $x_3$ then $x_2 < x_1 < x_3$ according to $\alpha$). Thus we get an optimization problem over the product space $[-1,1]^n$.
Denote by $\Phi:[-1,1]^n \to {\mathbb R}$ its objective function.
We consider the Efron---Stein decomposition of $\Phi$: $\Phi = \sum_{S: |S|\leq k} \Phi_S$ (see Section~\ref{sec:EfronStein}). Here, informally, $\Phi_S$ is the part of $\Phi$ that depends on variables $x_i$ with $i\in S$. All functions $\Phi_S$ are uncorrelated: $\E{\Phi_{S_1} \Phi_{S_2}} = 0$ for $S_1 \neq S_2$. We show that each $\Phi_S$ is either identically equal to $0$ or has variance greater than some positive number, which depends only on $k$ (see Section~\ref{sec:ES-ordering-CSP}, Lemma~\ref{lem:beta}).
We now consider two cases.

I. In the first case, there are at most $\Theta_k(t^2)$  terms $\Phi_S$  not equal to $0$. Using that $\Phi_S$ depends only on variables $x_i$ with $i\in \bigcup_{\Phi_S \neq 0} S$ and that there are at most $\Theta_k(t^2)$ sets $S$ such that $\Phi_S \neq 0$, we get that $\Phi$ depends on at most $O_k(t^2)$ variables and we are done.

II. In the second case, there are at least $c_k t^2$  terms $\Phi_S$  not equal to $0$. Since the variance of each term $\Phi_S$ is $\Theta_k(1)$ and all terms $\Phi_S$ are uncorrelated, the variance of $\Phi$ is at least $\Theta_k(t^2)$ (see Theorem~\ref{thm:variance}). Therefore, $\Phi$ deviates from $AVG = \E{\Phi}$ by at least $\sqrt{\Var{\Phi}} = \Theta(t)$. We then show that $\Phi-\Exp\Phi$ satisfies the conditions of Theorem~\ref{thm:bonami-for-ES} (see Lemma~\ref{lem:bound-on-C}) and the degree of the decomposition is at most $k$.
Thus, the ratio $\|\Phi-\Exp\Phi\|_4/\|\Phi-\Exp\Phi\|_2$ is bounded by $O_k(1)$. This implies by a theorem of Alon, Gutin, Kim, Szeider, and Yeo~\cite{AGKSY} (see Theorem~\ref{thm:AGKSY} in Section~\ref{sec:proof-main-thm})
that $\Pr (\Phi-\Exp\Phi > t)> 0$. Hence, $OPT \geq AVG + t$.




\section{Preliminaries}\label{sec:prelim}
\subsection{Ordering CSP}\label{sec:prelim-order-csp}
Consider a set of variables $V = \set{x_1,\dots, x_n}$.
An ordering constraint $\pi$ on
a subset of variables $x_{i_1}, \dots, x_{i_k}$
is a set of linear orderings of
$x_{i_1}, \dots, x_{i_k}$.
A linear ordering $\alpha$ of $V$  satisfies a constraint $\pi$ on $x_{i_1}, \dots, x_{i_k}$ if the restriction of $\alpha$ to
$x_{i_1}, \dots, x_{i_k}$ is in $\pi$.
We say that $\pi$ depends on variables $x_{i_1}, \dots, x_{i_k}$.
\begin{definition}
An instance $\cal I$ of an ordering constraint satisfaction problem consists of a set of variables $V = \set{x_1,\dots, x_n}$ and a set of constraints $\Pi$;
each constraint $\pi \in P$ depends on some subset of variables.
A feasible solution to $\cal I$ is a linear ordering of
variables $x_1,\dots, x_n$. The value $\val(\alpha) = \val_{\cal I} (\alpha)$ of a solution $\alpha$ is the number of constraints in $\Pi$ that $\alpha$ satisfies. The goal of the problem is to find a solution of maximum possible value.
\end{definition}

We denote the value of the optimal solution by $OPT$:
$$OPT = \max_{\alpha} \val(\alpha).$$
The average value $AVG$ of an instance is the expected value of a solution chosen uniformly at random among $n!$ feasible solutions:
$$AVG = \Exp_{\alpha}[\val(\alpha)].$$
We say that $\cal I$ has arity $k$ if each constraint
in $\cal I$ depends on at most $k$ variables.

\begin{definition}
In the Satisfiability Above Average Problem,
we are given an instance of an ordering CSP of arity $k$
and a parameter $t$. We need to decide if there is a solution $\alpha$ that satisfies at least $AVG + t$ constraints, or, in other words, if $OPT \geq  AVG + t$.
\end{definition}

In this paper, we show that this problem is fixed-parameter tractable. To this end, we design an algorithm
that either finds a kernel on $O(t^2)$ variables or certifies that $OPT \geq  AVG + t$.
\begin{theorem}\label{thm:main}
There is an algorithm that given an instance of an
ordering CSP problem of arity $k$ and a parameter $t$,
either finds a kernel on at most $\kappa_k t^2$ variables
(where constant $\kappa_k$ depends only on $k$)
or
certifies that $OPT \geq  AVG + t$.
The algorithm runs in time $O_k(m+n)$ linear in the number of constraints $m$ and variables $n$ (the coefficient in the $O$-notation depends on $k$).	
\end{theorem}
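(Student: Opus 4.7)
The plan is to execute the two-case dichotomy laid out in Section~\ref{sec:proof-overview}. First I would lift the instance to the continuous product space $[-1,1]^n$, writing $\Phi\colon[-1,1]^n\to\mathbb{R}$ for the number of constraints satisfied by the linear order that a point $(x_1,\dots,x_n)$ with distinct coordinates induces on the variables. When the coordinates are drawn independently and uniformly from $[-1,1]$ they are almost surely distinct and the induced permutation is uniform, so $\E{\Phi}=AVG$ and $\esssup_{x\in[-1,1]^n}\Phi(x)=OPT$. Hence it is enough to either (i) show that $\Phi$ depends on at most $\kappa_k t^2$ of the $x_i$'s and output the restriction of $\cal I$ to those variables as the kernel, or (ii) exhibit a point where $\Phi$ exceeds $AVG+t$.

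Next I would compute the Efron--Stein decomposition $\Phi=\sum_{|S|\leq k}\Phi_S$ by summing the decompositions of the per-constraint indicators. Each per-constraint decomposition has at most $2^k$ terms, so a single pass over $\Pi$ using a precomputed lookup table indexed by the predicate and by the subset $S\subseteq V_\pi$ identifies, in $O_k(m+n)$ time, the family $\mathcal{S}=\{S:\Phi_S\not\equiv 0\}$ and the set $U=\bigcup_{S\in\mathcal{S}} S$ of essential coordinates. Since $\Phi_S\equiv 0$ for every $S\not\subseteq U$, the function $\Phi$ depends only on the coordinates indexed by $U$.

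Fix a threshold $c_k$ to be chosen below. If $|\mathcal{S}|\leq c_k t^2$ then $|U|\leq k c_k t^2=:\kappa_k t^2$, and I output the restriction of $\cal I$ to $U$ as a kernel on at most $\kappa_k t^2$ variables. Otherwise $|\mathcal{S}|>c_k t^2$, and combining the orthogonality of the Efron--Stein decomposition with the uniform lower bound $\|\Phi_S\|_2^2\geq\beta_k$ supplied by Lemma~\ref{lem:beta} yields
\begin{equation*}
\Var(\Phi)=\sum_{S\in\mathcal{S}}\|\Phi_S\|_2^2 \geq \beta_k c_k t^2.
\end{equation*}
To promote this $L_2$ estimate to a pointwise deviation, I would apply the Bonami inequality for the Efron--Stein decomposition (Theorem~\ref{thm:bonami-for-ES}) to $\Phi-\E{\Phi}$: its hypothesis~(\ref{eq:Bonami-Fourth-Moment}) holds with a $k$-dependent constant $C_k$ by Lemma~\ref{lem:bound-on-C}, and the degree of the decomposition is at most $k$, so $\|\Phi-\E{\Phi}\|_4/\|\Phi-\E{\Phi}\|_2$ is bounded by a constant $K_k$. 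Feeding this ratio together with $\|\Phi-\E{\Phi}\|_2\geq\sqrt{\beta_k c_k}\,t$ into the Alon--Gutin--Kim--Szeider--Yeo concentration bound (Theorem~\ref{thm:AGKSY}) yields $\Pr(\Phi-\E{\Phi}>t)>0$ provided $c_k$ is chosen large enough as a function of $\beta_k$ and $K_k$, which certifies $OPT\geq AVG+t$.

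The main obstacle I expect lies in the two structural inputs, not in the dichotomy itself. Lemma~\ref{lem:beta}---the uniform positive lower bound on $\|\Phi_S\|_2$ over all nonzero Efron--Stein components of a single ordering-CSP predicate---is what allows a combinatorial count of nonzero pieces to drive a quantitative variance bound, and proving it requires a close look at the explicit form of these pieces when the predicate is an ordering predicate on $[-1,1]^k$. Verifying the local fourth-moment hypothesis~(\ref{eq:Bonami-Fourth-Moment}) with a uniform constant, via Lemma~\ref{lem:bound-on-C}, is the other delicate step. Once both are in hand, the algorithm and its $O_k(m+n)$ running time fall out of the single linear-time pass over $\Pi$.
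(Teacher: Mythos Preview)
Your proposal is correct and follows essentially the same route as the paper: compute the Efron--Stein decomposition, threshold on the number of nonzero pieces (equivalently, on the number $|V'|$ of essential variables, as in the paper, since $|U|\le k|\mathcal S|$ and $|\mathcal S|\ge |U|/k$), and in the large case combine Lemma~\ref{lem:beta} with Theorem~\ref{thm:bonami-for-ES} (via Lemma~\ref{lem:bound-on-C}) and Theorem~\ref{thm:AGKSY} to certify $OPT\ge AVG+t$. The only cosmetic difference is that the paper parameterizes the dichotomy by $|V'|$ rather than $|\mathcal S|$; also note that the variance sum should omit $S=\varnothing$, a harmless off-by-one.
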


\subsection{Ordering CSPs over $[-1,1]^n$}\label{sec:csp-over-product-space}

Consider an instance $\cal I$ of an ordering CSP on variables $x_1,\dots, x_n$.
Let us say that a continuous feasible solution to $\cal I$ is an assignment of distinct values $\hat x_1, \dots, \hat x_n \in [-1,1]$ to variables $x_1,\dots, x_n$. Each continuous solution $\hat x_1, \dots, \hat x_n$ defines
an ordering $\alpha$ of variables $x_i$: $x_a$ is less then $x_b$ with respect to $\alpha$ if and only if $\hat x_a < \hat x_b$.
We define the value of a continuous solution $\hat x_1, \dots, \hat x_n$
as the value of the corresponding solution (linear ordering) $\alpha$. We will denote the value of
solution $\hat x_1, \dots, \hat x_n$ by
$\Phi(\hat x_1, \dots, \hat x_n)$.

Note that if we sample a continuous solution
$\hat x_1, \dots, \hat x_n$ uniformly at random, by  choosing values $\hat x_i$ independently and uniformly from $[-1,1]$, the corresponding solution $\alpha$
will be uniformly distributed among $n!$ feasible solutions. Therefore,
\begin{equation*}
OPT = \max_{\hat x_1, \dots, \hat x_n\in [-1,1]} \Phi(\hat x_1, \dots, \hat x_n)\quad\text{and}\quad
AVG = \Exp_{\hat x_1, \dots, \hat x_n\in [-1,1]}
\Phi(\hat x_1, \dots, \hat x_n). 	
\end{equation*}
Note that all $\hat x_i$ are distinct a.s. 
and thus a random point in $[-1,1]^n$ is a feasible
continuous solution a.s. 

\subsection{Efron---Stein Decomposition}\label{sec:EfronStein}
The main technical tool in this paper is the Efron---Stein decomposition. We refer the reader to~\cite[Section 8.3]{ODonnell} for
a detailed description of the decomposition. Now, we just remind its definition and basic properties.

The Efron---Stein decomposition can be seen as a generalization of the Fourier expansion
of Boolean functions on the Hamming cube $\{\pm 1\}^n$. Consider the Fourier expansion of a function
$f:\{\pm 1\} \to \mathbb R$,
$$f(x_1,\dots,x_n) = \sum_{S\subset \{1,\dots, n\}} \hat f_S \chi_S(x_1,\dots, x_n),$$
where $\hat f_S$ are Fourier coefficients of $f$. Informally,
the Fourier expansion  breaks $f$ into pieces, $\hat f_S \chi_S(x_1,\dots, x_n)$, each of which depends on its own set of variables: The term
$\hat f_S \chi_S(x_1,\dots, x_n)$ depends on variables $\{x_i:i \in S\}$ and no other variables.

The Efron---Stein decomposition is an analogue of the Fourier expansion for functions defined on arbitrary product probability spaces. Consider a probability space $(\Omega,\mu)$ and the product probability space $(\Omega^n,\mu^n)$.
Let $f:\Omega^n\to \mathbb R$ be a function (random variable) on $\Omega^n$. Informally, the Efron---Stein decomposition of $f$ is the decomposition of $f$ into the sum of
functions $f_S$,
$$f = \sum_{S\subset \{1,\dots, n\}} f_S,$$
in which $f_S:\Omega^n\to \mathbb R$ depends on variables $\set{x_i :i\in S}$.

We formally define the  Efron---Stein decomposition  as follows.
Consider the space $L_2(\Omega^n,\mu^n)$ of functions on $\Omega^n$ with bounded second moment. Note that $L_2(\Omega^n,\mu^n) = \bigotimes_{i=1}^n L_2(\Omega,\mu)$.
That is, every $f\in L_2(\Omega^n,\mu^n)$ can be represented as
$$f(x_1,\dots, x_n) = \sum_j f^j_1(x_1)\cdot f^j_2(x_2)\cdots  f^{j}_n(x_n),$$
for some functions $f^j_i\in L_2(\Omega,\mu)$. Let $\Lambda_0\subset L_2(\Omega,\mu)$ be the one-dimensional space of constant functions on $\Omega$. Let $\Lambda_{\perp} \subset  L_2(\Omega,\mu)$ be the orthogonal complement to $\Lambda_0$. That is, $\Lambda_{\perp}$ is the space of functions
$f\in L_2(\Omega,\mu)$ with $\E{f} = 0$. We have, $L_2(\Omega,\mu) = \Lambda_0 \oplus \Lambda_{\perp}$ and
$$L_2(\Omega^n,\mu^n) = \bigotimes_{i=1}^n L_2(\Omega,\mu) = \bigotimes_{i=1}^n (\Lambda_0 \oplus \Lambda_{\perp}).$$
Expanding this decomposition, we get a representation of $L_2(\Omega^n,\mu^n)$ as the direct sum of $2^n$ spaces:
$$L_2(\Omega^n,\mu^n)= \bigoplus_{S\subset \{1,\dots, n\}} V_S,$$
where $V_S$ is the closed linear span of the set of functions of the form $\prod_{i=1}^n f_i(x_i)$ where $f_i\in \Lambda_{\perp}$ if $i\in S$, and $f_i\in \Lambda_0$ if $i\notin S$.
Since functions in $\Lambda_0$ are constants, $V_S$ equals the closed linear span of the set of functions of the form $\prod_{i\in S} f_i(x_i)$ where
$f_i\in \Lambda_{\perp}$.

Consider a function $f\in L_2(\Omega^n,\mu^n)$. Let $f_S$ be the orthogonal projection of $f$ onto $V_S$.
Since the linear spaces $V_S$ are orthogonal, we have
$$f = \sum_{S \subset \{1,\dots, n\}} f_S.$$
We call this decomposition the Efron---Stein decomposition of $f$. We define the degree of $f$ as $\max \{|S|: f_S \neq 0\}$,
the size of the largest subset $S$ such that  $f_S$ is not identically equal to $0$ (we let the degree of $0$ to be $0$).

Let $(X_1,\dots,X_n)$ be a random element of $\Omega^n$. That is, $X_1,\dots,X_n$ are $n$ independent random elements of $\Omega$; each of them is distributed according to $\mu$. We write $f = f(X_1,\dots, X_n)$.
We will employ the following properties of the Efron---Stein decomposition (see \cite[Section 8.3]{ODonnell}).
\begin{enumerate}
\item $f_S(x_1,\dots, x_n)$ depends only on variables $x_i$ with $i \in S$.
\item For every two sets $S$ and $T$, $S\neq T$, we have $\E{f_S f_T} = 0$.
\item Let $S_1, \dots, S_r$ be subsets of $\{1,\dots, n\}$. Suppose that there is an index $j$ that belongs to exactly
one set $S_i$. Then $\E{\prod_{i=1}^r f_{S_i} } = 0$.
\end{enumerate}

We will also use the following equivalent and more explicit definition of the Efron---Stein decomposition.
For every subset $S$ of indices $\{1,\dots, n\}$, let
\begin{align}
f_{\subset S} &= \E{f(X_1,\dots, X_n)| \text{ all } X_i \text{ with } i \in S},\label{eq:ES-1}\\
f_S &= \sum_{T\subset S} (-1)^{|S\setminus T|} f_{\subset T}.\label{eq:ES-2}
\end{align}

\section{Efron---Stein Decomposition of Ordering CSP Objective}\label{sec:ES-ordering-CSP}
In this section, we study the Efron---Stein
decomposition of the function $\Phi(x_1,\dots, x_n)$.
To this end, we represent $\Phi(x_1,\dots, x_n)$ as a sum of
``basic ordering predicates'' and then analyze the Efron---Stein of a basic ordering predicate.

\subsection{Basic Ordering Predicate}\label{subsec:bop}
Let $\tau = (\tau_1,\dots,\tau_r)$ be a tuple of distinct
indices from $1$ to $n$.
Define the basic ordering predicate $\phi_{\tau}$ for $\tau$,
$$\phi_\tau(x_1,\dots, x_n) =
\begin{cases}
1, & \text{ if } x_{\tau_1} < x_{\tau_2} < \dots < x_{\tau_r},\\
0, & \text{ otherwise.}
\end{cases}
$$
Note that the indicator of each constraint
$\pi$ is a sum of ordering predicates:
$$\sum_{\tau: \text{ ordering } x_{\tau_1} < x_{\tau_2} < \dots < x_{\tau_r} \text{ is in } \pi} \phi_\tau(x_1,\dots, x_n),$$
where the sum is over permutations of variables that
the constraint $\pi$ depends on.
Since $\Phi$ is the sum of indicators of all
predicates $\pi$ in $\Pi$, $\Phi$ is also a sum
of basic ordering predicates $\phi_{\tau}$:
$$\Phi(x_1,\dots,x_n) = \sum_{\tau \in \Tau}
\phi_\tau(x_1,\dots, x_n) \ifconf.\else,\fi$$
 for some multiset $\Tau$.

\subsection{Efron---Stein Decomposition of Ordering Predicates}
Let $\Omega = [-1,1]$ and $\mu$ be the uniform measure on $\Omega$. We study the Efron---Stein decomposition
of a basic ordering predicate $\phi_{\tau}$.

\ifconf\pagebreak\fi
\begin{theorem}\label{thm:ES-ordering-predicate}
Let $\tau$ be a tuple of distinct indices of size $d \leq k$. Denote $g=\phi_{\tau}$. Consider the Efron---Stein decomposition of $g$, $g = \sum g_S$, over $[-1,1]^n$ with uniform measure.
There exists a set of polynomials $q_{S,\tau'}$ with integer coefficients of degree at most $d$ such that
$$g_S(x_1,\dots, x_n) = \sum_{\tau'}
\phi_{\tau'}(x_1,\dots, x_n) \frac{q_{S,\tau'}(x_1,\dots, x_n)}{2^{d} d!},$$
where the summation is over all permutations $\tau'$ of $S$. The polynomial $q_{S,\tau'}$ depends only on variables in $\set{x_i:i\in S}$. It is equal to $0$ if
$S$ is not a subset of $\set{\tau_1,\dots,\tau_d}$.
\end{theorem}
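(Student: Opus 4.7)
The natural tool is the explicit formula~(\ref{eq:ES-2}):
$$g_S = \sum_{T \subset S} (-1)^{|S \setminus T|} g_{\subset T}.$$
First I would dispose of the case $S \not\subset \{\tau_1,\ldots,\tau_d\}$: pick any $j \in S$ outside $\{\tau_1,\ldots,\tau_d\}$. Since $g = \phi_\tau$ does not depend on $x_j$, the quantity $g_{\subset T}$ is unchanged when $j$ is added to or removed from $T$, while the signs $(-1)^{|S \setminus T|}$ flip. Pairing each $T \subset S \setminus \{j\}$ with $T \cup \{j\}$ then forces $g_S = 0$.

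Assume now $S \subset \{\tau_1,\ldots,\tau_d\}$ and fix $T \subset S$. I would compute $g_{\subset T}$ explicitly. Given values $x_i$ for $i \in T$, the conditional probability $\Pr[x_{\tau_1} < \cdots < x_{\tau_d} \mid (x_i)_{i \in T}]$ vanishes unless the restriction of the $\tau$-order to $T$ is respected, in which case the $d-|T|$ indices of $\tau$ outside $T$ are split into consecutive blocks of sizes $m_0, m_1, \ldots, m_{|T|}$, and the variables indexed by each block must land in increasing order within the corresponding gap between consecutive conditioning values. Integrating over $[-1,1]$ yields
$$g_{\subset T}(x) \,=\, \phi_{\tau|_T}(x) \cdot \prod_{i=0}^{|T|} \frac{1}{m_i!}\left(\frac{b_{i+1} - b_i}{2}\right)^{m_i} \,=:\, \phi_{\tau|_T}(x) \cdot P_T(x),$$
where $b_0, b_1, \ldots, b_{|T|+1}$ are $-1$ followed by the values $(x_i)_{i \in T}$ listed in $\tau$-order, followed by $+1$. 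The polynomial $P_T$ depends only on $\{x_i : i \in T\}$ and has degree $\sum_i m_i = d - |T|$.

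To assemble $g_S$, fix a permutation $\tau'$ of $S$ and restrict to the open region where $\phi_{\tau'}(x)=1$. There $\phi_{\tau|_T}(x) = 1$ exactly when $\tau'|_T$ agrees with $\tau|_T$, so
$$g_S(x) \,=\, \sum_{\substack{T \subset S\\ \tau'|_T = \tau|_T}} (-1)^{|S \setminus T|} P_T(x),$$
and I would set $q_{S,\tau'}(x) := 2^d d! \cdot \sum_T (-1)^{|S\setminus T|} P_T(x)$. This depends only on $\{x_i : i \in S\}$ and has degree at most $d$. Integrality of its coefficients follows from the identity $2^d d! \cdot P_T = 2^{|T|} \cdot \frac{d!}{\prod_i m_i!} \cdot \prod_i(b_{i+1}-b_i)^{m_i}$, together with $\frac{d!}{\prod m_i!} = \frac{d!}{(d-|T|)!}\binom{d-|T|}{m_0,\ldots,m_{|T|}} \in \bbZ$.

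The main obstacle is the conditional-expectation computation in the second paragraph; once it is set up, the remainder is an inclusion-exclusion cancellation and routine bookkeeping of integer coefficients.
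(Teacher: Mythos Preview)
Your proposal is correct and follows essentially the same route as the paper: compute $g_{\subset T}$ as the indicator that the $T$-variables respect the $\tau$-order times the product $\prod_i \frac{1}{m_i!}\bigl(\frac{b_{i+1}-b_i}{2}\bigr)^{m_i}$ over the gaps, apply the M\"obius-type formula~(\ref{eq:ES-2}), and verify integrality of $2^d d!\,P_T$ via the multinomial divisibility $\prod_i m_i! \mid (d-|T|)! \mid d!$. The only cosmetic difference is that the paper disposes of the case $S\not\subset\{\tau_1,\dots,\tau_d\}$ by directly citing that $g$ depends only on those variables, whereas you spell out the pairing cancellation.
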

\begin{proof}
We may assume without loss of generality that
$\tau=\{1,2,\dots, d\}$.
Since $g$ depends only on
variable $x_1,\dots, x_d$, $g_S \neq 0$ only if $S \subset \set{1,\dots, d}$. We may therefore assume that $n = d$
for notational convenience.

Denote the elements of $S$ by $s_1 < s_2 < \dots < s_t$.
Define auxiliary variables
$X_0 = -1$ and $X_{d+1} = 1$, and let $s_0 = 0$ and $s_{t+1} = d+1$.
Let ${\cal O}_{ab}$ be the indicator of the event that
$X_i < X_j$ for every $a\leq i < j \leq b$.
Then $g = \ord_{1,d}$.  Note that
$g = \ord_{1,d} = \prod_{i=0}^{t} \ord_{s_i,s_{i+1}}$.
All events for $\ord_{s_i,s_{i+1}}$ (for $i\in \{0,\dots, t\}$)
are independent given variables $X_{s_1},\dots, X_{s_t}$. Therefore,
\begin{equation}\label{eq:g-subset-S}
g_{\subset S} = \E{g \given X_{s_1},\dots, X_{s_t}} =
\prod_{i=0}^{t} \E{\ord_{s_i,s_{i+1}} \given X_{s_1},\dots, X_{s_t}}.
\end{equation}
For each $i$, we have
$$\E{\ord_{s_i,s_{i+1}} \given X_{s_1},\dots, X_{s_t}} = \E{\ord_{s_i,s_{i+1}} \given X_{s_i}, X_{s_{i+1}}}= \Prob{\ord_{s_i,s_{i+1}} = 1\given X_{s_i}, X_{s_{i+1}}}.$$
If $X_{s_i} \geq X_{s_{i+1}}$, then $\Prob{\ord_{s_i,s_{i+1}} = 1 \given X_{s_i}, X_{s_{i+1}}}= 0$. Otherwise,
$$\Prob{\ord_{s_i,s_{i+1}} = 1 \given X_{s_i}, X_{s_{i+1}}}=
\left(\frac{X_{s_{i+1}} - X_{s_{i}}}{2}\right)^{s_{i+1} - s_{i}-1}\cdot \frac{1}{(s_{i+1} - s_{i}-1)!}.$$
We computed the probability above as follows: Given $X_{s_i}\leq X_{s_{i+1}}$, the probability that
$X_j\in [X_{s_i}, X_{s_{i+1}}]$ for all $j\in \{s_i,\dots, s_{i+1}\}$
equals $\big((X_{s_{i+1}} - X_{s_{i}})/2\big)^{s_{i+1}-s_i -1}$. Then, given that $X_{s_i}\leq X_{s_{i+1}}$
and $X_j\in [X_{s_i}, X_{s_{i+1}}]$ for all $j\in \{s_i,\dots, s_{i+1}\}$, the probability that
$X_{s_i+1}\leq \dots \leq X_{s_{i+1}-1}$ equals $1/(s_{i+1}-s_i -1)!$ as all orderings
of $X_{s_i+1}, \dots, X_{s_{i+1}-1}$ are equally likely.
We get
$$\E{\ord_{s_i,s_{i+1}} \given X_{s_1},\dots, X_{s_t}} = I\set{X_{s_i} < X_{s_{i+1}}} \frac{\left(X_{s_{i+1}} - X_{s_{i}}\right)^{s_{i+1} - s_{i}-1}}{2^{s_{i+1} - s_{i}-1} (s_{i+1} - s_{i}-1)!}.$$
Plugging this expression in (\ref{eq:g-subset-S}), we obtain the following formula
\begin{align*}
	g_{\subset S} &= \prod_{i=0}^t I\set{X_{s_i} < X_{s_{i+1}}} \frac{\left(X_{s_{i+1}} - X_{s_{i}}\right)^{s_{i+1} - s_{i}-1}}{2^{s_{i+1} - s_{i}-1} (s_{i+1} - s_{i}-1)!} \\
	&=
I\set{X_{s_1}<X_{s_2}< \dots X_{s_t}}\,
\frac{\prod_{i=0}^t \left(X_{s_{i+1}} - X_{s_{i}}\right)^{s_{i+1} - s_{i}-1}}{2^{d-|S|}
\prod_{i=0}^t (s_{i+1} - s_{i}-1)!}.
\end{align*}
Observe that $\prod_{i=0}^t (s_{i+1} - s_{i}-1)$ divides
$(\sum_{i=0}^t (s_{i+1} - s_{i}-1))! = (d - |S|)!$.
Thus
$$\frac{d!}{\prod_{i=0}^t (s_{i+1} - s_{i}-1)!}\,2^{|S|}\,\prod_{i=0}^t \left(X_{s_{i+1}} - X_{s_{i}}\right)^{s_{i+1} - s_{i}-1}$$
is a polynomial with integer coefficients of degree at most  $d -|S|$. Denote this polynomial by $p_S$.
Then,
$$
g_{\subset S} = I\set{X_{s_1}<X_{s_2}< \dots X_{s_t}}
\frac{p_{S}(X_1,\dots, X_d)}{2^{d} d!}\\
=
\sum_{\tau'}
\phi_{\tau'}(X_1,\dots, X_d) \frac{p_{S}(X_1,\dots, X_d)}{2^{d} d!},
$$
where the sum is over all permutations $\tau'$ of $\{1,\dots, d\}$.
Using the identity
$
f_S = \sum_{T\subset S} (-1)^{|S\setminus T|} f_{\subset T}$,
we get a representation of $S$ as
$$f_S = \sum_{\tau'}
\phi_{\tau'}(X_1,\dots, X_d) \frac{q_{S,\tau}(X_1,\dots, X_d)}{2^{d} d!},$$
where $q_{S,\tau'}$ are some polynomials with integer coefficients.
\end{proof}
Since $\Phi$ is a sum of some basic ordering predicates (see Section~\ref{subsec:bop}), we get the following corollary.

\begin{corollary}\label{cor:ES-objective}
Let $\cal I$ be an instance of an ordering CSP problem of arity
at most $k$. Let $\Phi(x_1,\dots,x_n)$ be the value of continuous solution $(x_1,\dots,x_n)$.
Then the Efron---Stein decomposition of $\Phi$ has degree at most $k$. Moreover there exist
polynomials $q_{S,\tau}$ with integer coefficients of degree at most $k$ such that
$$\Phi_S(x_1,\dots,x_n) = \sum_{\tau\in \Tau'}
\phi_{\tau}(x_1,\dots,x_n) \frac{q_{S,\tau}(x_1,\dots, x_n)}{2^{k} k!},$$
where the summation is over some set $\Tau'$ of tuples of indices in $S$, and $q_{S,\tau}$ depends only on \ifconf\else variables in\fi $\set{x_i:i \in S}$.
\end{corollary}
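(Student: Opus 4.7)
The plan is to deduce the corollary directly by linearity from Theorem~\ref{thm:ES-ordering-predicate} together with the representation $\Phi = \sum_{\tau \in \Tau} \phi_{\tau}$ from Section~\ref{subsec:bop}. Since each constraint of $\cal I$ depends on at most $k$ variables, every tuple $\tau \in \Tau$ has length $d(\tau) \leq k$.

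First, I would use that the Efron---Stein decomposition is a linear operator (since each projection onto $V_S$ is linear) and write
\[
\Phi_S \;=\; \sum_{\tau \in \Tau} (\phi_\tau)_S.
\]
Applying Theorem~\ref{thm:ES-ordering-predicate} to each $\phi_\tau$ (with $d = d(\tau)$), the term $(\phi_\tau)_S$ vanishes unless $S \subseteq \{\tau_1,\dots,\tau_{d(\tau)}\}$. In particular, $(\phi_\tau)_S$ can be nonzero only when $|S| \leq d(\tau) \leq k$, so $\Phi_S = 0$ for $|S| > k$, which gives the degree bound.

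Next, for each $\tau \in \Tau$ for which $(\phi_\tau)_S \neq 0$, Theorem~\ref{thm:ES-ordering-predicate} yields polynomials $q_{S,\tau'}^{(\tau)}$ of degree at most $d(\tau) \le k$ in the variables $\{x_i : i \in S\}$ with integer coefficients such that
\[
(\phi_\tau)_S(x_1,\dots,x_n) \;=\; \sum_{\tau'} \phi_{\tau'}(x_1,\dots,x_n)\,\frac{q_{S,\tau'}^{(\tau)}(x_1,\dots,x_n)}{2^{d(\tau)}\,d(\tau)!},
\]
where $\tau'$ ranges over permutations of $S$. To obtain the common denominator $2^k k!$, I multiply the numerator and denominator by the positive integer $2^{k-d(\tau)}\,k!/d(\tau)!$, which preserves the integrality of the coefficients and increases the degree by $0$. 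Collecting terms by the permutation $\tau'$ of $S$ and summing over $\tau \in \Tau$, I set
\[
q_{S,\tau'}(x_1,\dots,x_n) \;=\; \sum_{\tau \in \Tau} \frac{2^{k-d(\tau)}\,k!}{d(\tau)!}\, q_{S,\tau'}^{(\tau)}(x_1,\dots,x_n),
\]
which is again a polynomial of degree at most $k$ in $\{x_i : i \in S\}$ with integer coefficients. Taking $\Tau'$ to be the set of permutations $\tau'$ of $S$ for which the resulting $q_{S,\tau'}$ is not identically zero then yields the required expression.

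There is no real obstacle here; the content has already been packaged into Theorem~\ref{thm:ES-ordering-predicate}, and the only point requiring care is the bookkeeping to rescale each contribution from its natural denominator $2^{d(\tau)} d(\tau)!$ to the uniform denominator $2^k k!$ while keeping the coefficients integral, which is handled by the integrality of the multinomial-type factor $2^{k-d(\tau)}\, k!/d(\tau)!$.
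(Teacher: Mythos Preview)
Your proposal is correct and follows exactly the approach the paper intends: the corollary is obtained by linearity of the Efron--Stein projections from Theorem~\ref{thm:ES-ordering-predicate} applied to each summand in $\Phi = \sum_{\tau\in\Tau}\phi_\tau$, and you have carefully filled in the bookkeeping (the integer rescaling from $2^{d(\tau)}d(\tau)!$ to $2^k k!$) that the paper leaves implicit.
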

\subsection{Variance of Ordering CSP Objective}\label{sec:variance-CSP}
In this section, we show that the variance $\Varr{\Phi} = \Omega(\nu)$ if $\Phi$  (non-trivially) depends on at least $\nu$ variables.

\begin{claim}\label{claim:compact}
There exists a sequence of positive numbers $\alpha_d$ such that for every polynomial $f(x_1,\dots,x_d)$ of degree at most $d$ with integer coefficients we have $\E{\phi_{1,\dots,d} f^2} \geq \alpha_d$.
\end{claim}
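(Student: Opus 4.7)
The plan is to read this as a compactness statement on a finite-dimensional vector space of polynomials (indeed the claim is labelled \textit{compact}). Let $V_d$ denote the real vector space of polynomials in $x_1,\dots,x_d$ of total degree at most $d$; it is finite-dimensional. Define the quadratic form on $V_d$ by
\[
Q(f) \;:=\; \E{\phi_{1,\dots,d}(X_1,\dots,X_d)\, f(X_1,\dots,X_d)^2},
\]
where $X_1,\dots,X_d$ are independent and uniform on $[-1,1]$. The first step is to verify that $Q$ is positive definite on $V_d$: if $Q(f) = 0$, then $f$ vanishes almost everywhere on the open simplex $\Delta := \{x \in (-1,1)^d : x_1 < \dots < x_d\}$, and a polynomial that vanishes on a nonempty open subset of $\mathbb{R}^d$ must be identically zero.

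Next, I would equip $V_d$ with the max-coefficient norm $\|f\|_\infty := \max_\alpha |c_\alpha|$ for $f = \sum_\alpha c_\alpha x^\alpha$. Since $V_d$ is finite-dimensional, the unit sphere $S := \{f \in V_d : \|f\|_\infty = 1\}$ is compact. The map $f \mapsto Q(f)$ is continuous and, by the previous step, strictly positive on $S$, so by compactness
\[
\alpha_d \;:=\; \min_{f \in S} Q(f) \;>\; 0
\]
is a strictly positive constant depending only on $d$.

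To conclude, consider any nonzero polynomial $f \in V_d$ with integer coefficients. At least one coefficient has absolute value $\ge 1$, so $\|f\|_\infty \ge 1$. Using the homogeneity of $Q$,
\[
Q(f) \;=\; \|f\|_\infty^2 \cdot Q\!\left(\frac{f}{\|f\|_\infty}\right) \;\ge\; \|f\|_\infty^2 \cdot \alpha_d \;\ge\; \alpha_d,
\]
which is the desired bound. (The zero polynomial is necessarily excluded in any nontrivial reading of the claim; it is clearly the intended convention, since the claim is only useful for nonzero $f$ in its application.)

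No real obstacle is expected: the argument is a routine compactness argument on a finite-dimensional space, and the only essential non-generic ingredient is the elementary observation that nonzero integer-coefficient polynomials are bounded away from $0$ in the coefficient norm by $1$. The name \texttt{claim:compact} in the source confirms this is the intended proof.
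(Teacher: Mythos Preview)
Your proof is correct and follows essentially the same approach as the paper: define the quadratic form $Q(f)=\E{\phi_{1,\dots,d}f^2}$, argue it is positive on the (compact) unit sphere of the max-coefficient norm by noting a polynomial vanishing on the open simplex must be zero, take $\alpha_d$ to be the minimum, and then use that a nonzero integer polynomial has max coefficient at least~$1$ together with homogeneity of $Q$.
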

\begin{proof}
Consider the set $\calQ$ of polynomials over $x_1, \dots, x_d$
of degree at most $d$. Let $\calQ_1$ be the set of polynomials in $\calQ$, whose largest
in absolute value coefficient is equal to $1$ or $-1$.

Denote $V(f) = \E{\phi_{1,\dots,d} f^2}$. For every $f \in \calQ_1$, we have $V(f) > 0$ since $f$ is not identically equal to $0$ on $\{x_1\leq x_2\leq \dots \leq x_d\}$.
Note that $\calQ_1$ is a compact set and $V(f)$ is a continuous function on it.
Therefore, $V$ attains its minimum on $Q_1$.
Let $\alpha_d = \min_{f \in \calQ_1} V(f) > 0$.

Now let $f$ be a polynomial with integer coefficients of degree at most $d$. Denote the absolute value of its largest
coefficient (in absolute value) by $M$. $M$ is a positive integer and thus $M \geq 1$. We have $f/M \in Q_1$ and thus
\ifconf
$V(f) = M^2 \cdot V(f/M) \geq M^2 \alpha_d \geq \alpha_d$.
\else
$$V(f) = M^2 \cdot V(f/M) \geq M^2 \alpha_d \geq \alpha_d.$$
\fi
\end{proof}
\begin{lemma}\label{lem:beta}
The following claim holds for some positive parameters $\beta_k$. Let $\cal I$ be an instance of arity at most $k$. Let $\Phi = \sum_{S} \Phi_S$ be the Efron---Stein decomposition of $\Phi$. Then for every set $S$ either $\Phi_S = 0$ or $\E{\Phi_S^2} \geq \beta_k$.
\end{lemma}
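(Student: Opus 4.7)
The plan is to combine the explicit form of $\Phi_S$ given by Corollary~\ref{cor:ES-objective} with Claim~\ref{claim:compact}. By the corollary,
$$\Phi_S(x_1,\dots,x_n) = \frac{1}{2^k k!} \sum_\tau \phi_\tau(x_1,\dots,x_n)\, q_{S,\tau}(x_1,\dots,x_n),$$
where $\tau$ ranges over permutations of $S$ (extending by the zero polynomial if needed) and each $q_{S,\tau}$ is a polynomial of degree at most $k$ with integer coefficients, depending only on $\{x_i:i\in S\}$. For distinct permutations $\tau,\tau'$ of $S$, the indicators $\phi_\tau$ and $\phi_{\tau'}$ correspond to incompatible linear orderings, and since the $X_i$ are almost surely distinct we have $\phi_\tau\phi_{\tau'}=0$ almost surely. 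Squaring therefore kills the cross terms, and
$$\E{\Phi_S^2} = \frac{1}{(2^k k!)^2} \sum_\tau \E{\phi_\tau\, q_{S,\tau}^2}.$$

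Next, I would bound each individual term $\E{\phi_\tau q_{S,\tau}^2}$ with $q_{S,\tau}\not\equiv 0$ from below by a constant depending only on $k$. Since the $X_i$ are i.i.d.\ uniform on $[-1,1]$, the product measure is invariant under permutation of coordinates, so the relabeling $Y_j := X_{\tau_j}$ for $j=1,\dots,s$ (with $s=|S|$) preserves joint distributions. Under this relabeling $\phi_\tau(X)=\phi_{(1,\dots,s)}(Y)$, and $q_{S,\tau}(X)$ becomes a polynomial $\hat q(Y_1,\dots,Y_s)$ of degree at most $k$ with integer coefficients, still not identically zero. Thus
$$\E{\phi_\tau\, q_{S,\tau}^2} = \E{\phi_{(1,\dots,s)}(Y)\, \hat q(Y)^2}.$$

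Claim~\ref{claim:compact} was stated for polynomials of degree $\leq d$ in $d$ variables, but the same compactness argument applies verbatim to polynomials of degree $\leq k$ in any number $s\leq k$ of variables: the set of integer-coefficient polynomials whose largest coefficient equals $\pm 1$ is compact, and $\hat q \mapsto \E{\phi_{(1,\dots,s)}\hat q^2}$ is continuous and strictly positive on this set, hence bounded below by some $\alpha_{s,k}>0$. Rescaling (exactly as in the last paragraph of the proof of Claim~\ref{claim:compact}) gives $\E{\phi_\tau q_{S,\tau}^2}\geq \alpha_{s,k}$. Setting $\beta_k := (2^k k!)^{-2}\min_{s\leq k}\alpha_{s,k} > 0$ finishes the proof, since $\Phi_S\neq 0$ forces at least one $q_{S,\tau}$ to be nonzero, and the corresponding nonnegative term in the sum contributes at least $\beta_k$ to $\E{\Phi_S^2}$.

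There is no genuine obstacle here: all the heavy lifting was done in Corollary~\ref{cor:ES-objective} (integer-coefficient representation) and Claim~\ref{claim:compact} (compactness). The only points requiring care are the vanishing of cross terms, which follows from almost-sure distinctness of the $X_i$'s, and the mild extension of the compactness argument from polynomials of degree $\leq s$ in $s$ variables to polynomials of degree $\leq k$ in $s\leq k$ variables, which is immediate.
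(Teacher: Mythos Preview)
Your proposal is correct and follows essentially the same route as the paper: disjoint support of the $\phi_\tau$'s gives orthogonality, so $\E{\Phi_S^2}$ is bounded below by a single term $\E{\phi_\tau q_{S,\tau}^2}/(2^k k!)^2$, and Claim~\ref{claim:compact} supplies the uniform lower bound on that term. You are in fact slightly more careful than the paper in noting that Claim~\ref{claim:compact} as stated covers degree-$d$ polynomials in $d$ variables, whereas here one needs degree-$\leq k$ polynomials in $|S|\leq k$ variables; the paper glosses over this but your observation that the compactness argument extends immediately is exactly right.
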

\begin{proof}
Let $\beta_d = \alpha_d / (2^{k} k!)^2 > 0$, where $\alpha_d$ is as in Claim~\ref{claim:compact}.
Assume that $\Phi_S \neq 0$. By Corollary~\ref{cor:ES-objective},
$$\Phi_S(x_1,\dots,x_d) = \sum_{\tau\in \Tau'}
\phi_{\tau}(x_1,\dots,x_d) \frac{q_{S,\tau}(x_1,\dots, x_n)}{2^{k} k!}.$$
Note that all functions
$\phi_{\tau}(x_1,\dots,x_d)   q_{S,\tau}(x_1,\dots, x_n)/ (2^{k} k!)$ have disjoint support, and, therefore, are pairwise orthogonal.
Choose one tuple $\tau \in \Tau'$ such that $q_{S,\tau}\neq 0$.
We have, $$\E{\Phi_S^2} \geq \E{\phi_{\tau} q_{S,\tau}^2 / (2^{k} k!)^2} = \E{\phi_{\tau} q_{S,\tau}^2} / (2^{k} k!)^2.$$
By Claim~\ref{claim:compact}, $\E{\phi_{\tau} q_{S,\tau}^2} \geq \alpha_d$ and hence
$\E{\Phi_S^2} \geq \alpha_d / (2^{k} k!)^2$.
\end{proof}


We say that $\Phi$ depends on the variable $x_i$ if there exist two vectors $x$ and $x'$ that differ only in the $i$-th
coordinate such that $\Phi(x) \neq \Phi(x')$.

\begin{theorem}\label{thm:variance}
Let $\cal I$ be an instance of arity at most $k$. Suppose that $\Phi$ depends on
at least $\nu$ variables. Then  $\Varr{\Phi} \geq \nu \beta_k/k$.
\end{theorem}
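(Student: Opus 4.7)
The plan is to combine orthogonality of the Efron--Stein decomposition (property 2) with the lower bound of Lemma~\ref{lem:beta} and a simple covering argument. First I would write
\[
\Varr{\Phi} \;=\; \E{(\Phi - \E{\Phi})^2} \;=\; \sum_{S \neq \emptyset} \E{\Phi_S^2},
\]
using that $\Phi_\emptyset = \E{\Phi}$ and that the terms $\Phi_S$ are pairwise orthogonal. By Lemma~\ref{lem:beta}, every nonzero term $\Phi_S$ in this sum contributes at least $\beta_k$, so it suffices to exhibit at least $\nu/k$ distinct nonempty sets $S$ with $\Phi_S \neq 0$.

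Next I would observe that the family $\calF = \{S : \Phi_S \neq 0, S \neq \emptyset\}$ must cover every variable on which $\Phi$ depends. Indeed, each $\Phi_S$ depends only on variables indexed by $S$ (property 1 of the decomposition), so if $i$ belongs to no member of $\calF$ then $\Phi = \sum_S \Phi_S$ is a sum of functions none of which depends on $x_i$, contradicting the assumption that $\Phi$ depends on $x_i$. Hence $\bigcup_{S \in \calF} S$ has size at least $\nu$. Since Corollary~\ref{cor:ES-objective} gives that the decomposition has degree at most $k$, every $S \in \calF$ satisfies $|S| \leq k$, so
\[
|\calF| \;\geq\; \frac{|\bigcup_{S \in \calF} S|}{k} \;\geq\; \frac{\nu}{k}.
\]

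Finally, combining these two observations yields
\[
\Varr{\Phi} \;=\; \sum_{S \in \calF} \E{\Phi_S^2} \;\geq\; |\calF|\,\beta_k \;\geq\; \frac{\nu \beta_k}{k},
\]
as claimed. The only step requiring any care is the covering argument, i.e.\ verifying that a variable on which $\Phi$ genuinely depends must appear in the support of at least one nonzero Efron--Stein term; but this is immediate from property 1, and no other step involves a nontrivial obstacle.
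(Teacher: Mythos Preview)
Your proof is correct and essentially identical to the paper's own argument: both expand $\Varr{\Phi}$ via orthogonality of the Efron--Stein pieces, use Lemma~\ref{lem:beta} to get a $\beta_k$ lower bound per nonzero term, and then obtain at least $\nu/k$ nonzero terms by the covering argument $\bigl|\bigcup_{S:\Phi_S\neq 0} S\bigr|\geq \nu$ together with $|S|\leq k$. The only cosmetic difference is that the paper names the union $V'$ explicitly and states the covering inclusion as ``$\Phi$ depends only on variables in $V'$,'' which is exactly your property~1 argument.
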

\begin{proof}
Consider the Efron---Stein decomposition of $\Phi$. Let
$V' = \bigcup_{S: \Phi_S \neq 0} \set{x_i:i\in S}$.
Note that $\Phi$ depends on all variables in $V'$ and no other variables. Thus, $|V'|\geq\nu$.
There are at least $\nu / k$ non-empty sets $S$ with $\Phi_S \neq 0$ since each such set $S$ contributes at most $k$ elements to $V'$.
For $S\neq \varnothing$, we have $\E{\Phi_S} = 0$  and hence $\Varr{\Phi_S} = \E{\Phi_S^2}$. By Lemma~\ref{lem:beta},
$\Varr{\Phi_S} = \E{\Phi_S^2} \geq \beta_k$ if $\Phi_S \neq 0$ and $S\neq \varnothing$. We have,
\begin{equation*}\Varr{\Phi} = \sum_S \Varr{\Phi_S}
\geq |\set{S\neq \varnothing:\Phi_S \neq 0}| \beta_k
\geq (\nu / k) \beta_k\ifconf.\tag*{\qedhere}\fi
\end{equation*}
\ifconf\else
as required.
\fi
\end{proof}

\section{Bonami Lemma for ordering CSPs}
We are going to apply Theorem~\ref{thm:bonami-for-ES} (the Bonami Lemma for the Efron---Stein decomposition) to the function $f = \Phi - \E{\Phi}$, where $\Phi$ is the objective function of the ordering CSP problem. In order to do that, we show now that $f$ satisfies the condition of the theorem
(Condition~(\ref{eq:Bonami-Fourth-Moment})) with some constant $C$ that depends only on the arity of the CSP.

\begin{lemma}\label{lem:bound-on-C}
There exists a sequence of constants $C_k$ such that the following holds. Let $\cal I$ be an instance of an ordering CSP of arity at most $k$. Let $f = \Phi - \E{\Phi}$ and $S \subset \set{1,\dots, k}$ of cardinality at most $k$. Then	
$$\E{f_{S}^4} \leq C_k  \E{f_{S}^2}^2.$$
\end{lemma}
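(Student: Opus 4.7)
The plan is to combine the explicit description of $\Phi_S$ provided by Corollary~\ref{cor:ES-objective} with the lower bound on $\E{\Phi_S^2}$ already supplied by Lemma~\ref{lem:beta}. First I would reduce to the case $S\neq\varnothing$: since $f$ and $\Phi$ differ by a constant, their Efron--Stein components agree on every non-empty $S$ (i.e., $f_S=\Phi_S$), while $f_{\varnothing}=\E{f}=0$ makes the inequality trivial when $S=\varnothing$. It therefore suffices to show $\E{\Phi_S^4}\le C_k \E{\Phi_S^2}^2$ for $\Phi_S\neq 0$.

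By Corollary~\ref{cor:ES-objective},
\[
\Phi_S \;=\; \sum_{\tau} \phi_\tau\,\frac{q_{S,\tau}}{2^k k!},
\]
where $\tau$ ranges over some set of tuples of indices in $S$ (permutations of a subset of $S$), and each $q_{S,\tau}$ is a polynomial in $\{x_i:i\in S\}$ with integer coefficients of degree at most $k$. The structural fact I would exploit is that the indicators $\phi_\tau$ for distinct orderings $\tau$ have pairwise disjoint supports, since at most one strict ordering can hold among any fixed tuple of variables. Consequently, for every integer $m\ge 1$,
\[
\Phi_S^{\,m} \;=\; \sum_{\tau} \phi_\tau\,\frac{q_{S,\tau}^{\,m}}{(2^k k!)^m}.
\]

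Next I would establish a uniform pointwise bound $|q_{S,\tau}|\le B_k$ on $[-1,1]^n$, where $B_k$ depends only on $k$. Inspecting the construction in the proof of Theorem~\ref{thm:ES-ordering-predicate}, the polynomials $p_S$ are products of at most $k$ factors of the form $(X_j-X_i)^r$ with $r\le k$, scaled by a multinomial prefactor at most $k!\,2^k$; hence their integer coefficients and their number of monomials are each bounded by an explicit function of $k$, and the same is true of $q_{S,\tau}$, which is a signed sum of such $p_T$'s over $T\subset S$. Since each variable lies in $[-1,1]$, this yields $|q_{S,\tau}|\le B_k$, and in particular $q_{S,\tau}^{\,4}\le B_k^{\,2}\,q_{S,\tau}^{\,2}$. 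Summing over $\tau$ and using the disjoint-support identity above,
\[
\E{\Phi_S^{\,4}} \;=\; \sum_\tau \frac{\E{\phi_\tau\,q_{S,\tau}^{\,4}}}{(2^k k!)^4} \;\le\; \frac{B_k^{\,2}}{(2^k k!)^2}\sum_\tau \frac{\E{\phi_\tau\,q_{S,\tau}^{\,2}}}{(2^k k!)^2} \;=\; \frac{B_k^{\,2}}{(2^k k!)^2}\,\E{\Phi_S^{\,2}}.
\]

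To finish, I would invoke Lemma~\ref{lem:beta}: whenever $\Phi_S\neq 0$, we have $\E{\Phi_S^{\,2}}\ge \beta_k$, so $\E{\Phi_S^{\,2}}\le \beta_k^{-1}\E{\Phi_S^{\,2}}^{\,2}$. Plugging this into the previous display gives the desired bound with $C_k = B_k^{\,2}/\bigl((2^k k!)^2\,\beta_k\bigr)$. The only delicate point is the uniform pointwise bound on $q_{S,\tau}$; everything else is clean bookkeeping from facts already proved, and indeed this is the step that makes essential use of the restriction $|S|\le k$ (so that both the degree and the number of variables of $q_{S,\tau}$ are controlled by $k$).
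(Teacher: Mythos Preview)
Your argument has a genuine gap at the step where you claim a uniform pointwise bound $|q_{S,\tau}|\le B_k$ with $B_k$ depending only on $k$. The polynomials $q_{S,\tau}$ you are bounding are those appearing in Corollary~\ref{cor:ES-objective}, i.e., in the decomposition of $\Phi_S$, not in the decomposition of a single basic predicate. Since $\Phi$ is a sum of $m$ basic ordering predicates, each $q_{S,\tau}$ for $\Phi_S$ is a sum (over all constraints whose variable set contains $S$) of the corresponding polynomials from Theorem~\ref{thm:ES-ordering-predicate}. Its integer coefficients therefore grow with the number of constraints and are \emph{not} bounded by any function of $k$ alone. Concretely, if the instance consists of $m$ copies of the same constraint, then $q_{S,\tau}$ is $m$ times a fixed polynomial and $\|q_{S,\tau}\|_\infty$ scales like $m$.

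Because your $B$ can be arbitrarily large, the conversion step using Lemma~\ref{lem:beta} fails: from $\E{\Phi_S^4}\le \tfrac{B^2}{(2^kk!)^2}\,\E{\Phi_S^2}$ and the \emph{fixed} lower bound $\E{\Phi_S^2}\ge\beta_k$ you only get $\E{\Phi_S^4}\le \tfrac{B^2}{(2^kk!)^2\beta_k}\,\E{\Phi_S^2}^2$, with a constant that depends on the instance. What is actually needed is a lower bound on $\E{\Phi_S^2}$ that scales with $B^2$; equivalently, an inequality of the form $\|q_{S,\tau}\|_\infty^2 \le c_k^{-1}\,\E{\phi_\tau q_{S,\tau}^2}$. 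That is exactly a norm-equivalence statement on the finite-dimensional space of degree-$\le k$ polynomials in $\le k$ variables, which is the compactness argument the paper uses: it considers $\calQ=\{\sum_\tau \phi_\tau q_\tau : q_\tau\in\bbR_{\le k}[x_1,\dots,x_k]\}$, notes that $\calQ$ is finite-dimensional, and bounds $\E{g^4}$ on the compact set $\{g\in\calQ:\|g\|_2=1\}$. If you replace your appeal to Lemma~\ref{lem:beta} by such a norm-equivalence (or simply normalize by $\|\Phi_S\|_2$ from the start), your disjoint-support computation goes through and yields the same conclusion.
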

\begin{proof}
We assume that $S$ is non-empty as otherwise both the left and right hand sides of the inequality are equal to $0$
(since $f_{\varnothing} = \E{f} = 0$). Therefore, $f_{S} = \Phi_{S}$.

Since $|S| \leq k$, we may assume without loss of generality that $S \subset \set{1,\dots, k}$.
Let $\calQ$ be the set of all functions
of $x_1,\dots, x_{k}$ of the form
$$\sum \phi_{\tau}(x_1,\dots,x_k) q_{S,\tau}(x_1,\dots, x_k),$$
where $q_{S,\tau}$ are some polynomials of degree at most $k$ with real coefficients. By Corollary~\ref{cor:ES-objective}, $f_{S} \in \calQ$. Let $\calQ_1 =\set{h \in \calQ: \|h\|_2 = 1}$. Note that $\calQ_1$ is
a compact set (since $\calQ$ is a finite dimensional space; and $\|\cdot\|_2$ is a norm on it). Therefore, the continuous
function
$W(g) = \E{g^4}$
is bounded when $g \in\calQ_1$. Denote its maximum by $C_k$
(note that $C_k$ depends only on $k$ and not on $\cal I$).

Letting $g=  f_{S} / \|f_{S}\|_2 \in \calQ_1$,  we have,
$$\E{f_S^4} = \|f_{S}\|_2^4 \cdot W(g) \leq C_k \cdot \|f_{S}\|_2^4.$$
as required.
\end{proof}


\section{Proof of Main Theorems}\label{sec:proof-main-thm}
In this section, we prove Theorems~\ref{thm:main} and~\ref{thm:main-intro}. We will need the following theorem.
\begin{theorem}[Corollary 1 from Alon, Gutin, Kim, Szeider, and Yeo~\cite{AGKSY}]\label{thm:AGKSY}
Let $X$ be a real random variable. Suppose that $\E{X} = 0$, $\E{X^2} = \sigma^2$, and $\E{X^4} < b\sigma^4$
for some $b > 0$. Then $\Pr\big(X \geq \sigma / (2\sqrt b)\big)> 0$.	
\end{theorem}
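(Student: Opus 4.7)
The plan is to argue by contradiction. Assume $\Pr(X \geq \theta) = 0$ with $\theta := \sigma/(2\sqrt{b})$, so that $X < \theta$ almost surely; I will show this is incompatible with the hypotheses $\E{X} = 0$, $\E{X^2} = \sigma^2$, and $\E{X^4} < b\sigma^4$ (the latter two forcing $b > 1$ by Jensen's inequality).

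Split $\sigma^2 = A + B$, where $A := \E{(X^+)^2}$ and $B := \E{(X^-)^2}$, and set $\mu := \E{X^+}$; the mean-zero condition yields $\mu = \E{X^-}$ as well. Since $X^+ \leq \theta$ almost surely, $(X^+)^2 \leq \theta X^+$, hence $A \leq \theta \mu \leq \theta^2$. For $B$, I interpolate between the $L^1$ and $L^4$ norms of $X^-$: Lyapunov's inequality, with exponents determined by $\tfrac12 = \tfrac13 \cdot 1 + \tfrac23 \cdot \tfrac14$, gives $\|X^-\|_2 \leq \|X^-\|_1^{1/3} \|X^-\|_4^{2/3}$. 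Squaring and using $\|X^-\|_4^4 \leq \E{X^4} < b\sigma^4$ together with $\|X^-\|_1 = \mu \leq \theta$,
\[
 B \;\leq\; \mu^{2/3}\,\E{(X^-)^4}^{1/3} \;\leq\; \theta^{2/3}\, b^{1/3}\,\sigma^{4/3}.
\]
Substituting both bounds together with $\theta = \sigma/(2\sqrt{b})$ into $\sigma^2 = A + B$ and dividing through by $\sigma^2$ produces
\[
 1 \;\leq\; \frac{1}{4b} + \frac{1}{2^{2/3}}.
\]
Since $b > 1$ we have $1/(4b) < 1/4$, so the right-hand side is strictly less than $1/4 + 2^{-2/3} < 1$, the desired contradiction; hence $\Pr(X \geq \theta) > 0$.

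The main obstacle will be calibrating the Lyapunov step correctly. A naive bound $B = \|X^-\|_2^2 \leq \|X^-\|_4^2 \leq b^{1/2}\sigma^2$ fails to use the estimate $\mu \leq \theta$ coming from the mean-zero condition and is far too weak to yield any contradiction when $b > 1$. The exponent $1/3$ on $\|X^-\|_1$ is precisely what makes the bound on $B$ scale as $\theta^{2/3}\, b^{1/3}\,\sigma^{4/3}$, matching $\sigma^2$ exactly at the scale $\theta \asymp \sigma/\sqrt{b}$; the factor $1/2$ in $\theta = \sigma/(2\sqrt{b})$ then supplies the numerical slack needed to make $1/(4b) + 2^{-2/3} < 1$ hold for every $b > 1$.
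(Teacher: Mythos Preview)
The paper does not prove this statement: it is quoted verbatim as Corollary~1 of Alon, Gutin, Kim, Szeider, and Yeo~\cite{AGKSY} and invoked as a black box in the proofs of Theorems~\ref{thm:main} and~\ref{thm:main-A-CSP}. So there is no in-paper argument to compare against.

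Your proof is correct. A few remarks on the steps that deserve care. First, the reduction to $b>1$ is sound: by Jensen (or Cauchy--Schwarz), $\sigma^4=(\E{X^2})^2\le \E{X^4}<b\sigma^4$, and $\sigma>0$ because $\E{X^4}<b\sigma^4$ rules out $\sigma=0$. Second, the Lyapunov interpolation $\|X^-\|_2\le \|X^-\|_1^{1/3}\|X^-\|_4^{2/3}$ is exactly the log-convexity of $p\mapsto \log\|\cdot\|_p$ at $1/2=\tfrac13\cdot 1+\tfrac23\cdot\tfrac14$; this is the heart of your argument and is used correctly. Third, the numerical endgame checks out: with $\theta=\sigma/(2\sqrt b)$ one gets $A\le \theta^2=\sigma^2/(4b)$ and $B\le \theta^{2/3}b^{1/3}\sigma^{4/3}=\sigma^2/2^{2/3}$, so $1\le 1/(4b)+2^{-2/3}$, which is strictly below $1/4+2^{-2/3}\approx 0.88$ whenever $b>1$.

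For context, the argument in \cite{AGKSY} is a short second-/fourth-moment computation of Paley--Zygmund type rather than a positive/negative-part decomposition. Your route is a genuine alternative: it trades a direct moment inequality for a clean split $\sigma^2=A+B$ and a single interpolation step, and it makes transparent exactly where the constant $2$ in $\sigma/(2\sqrt b)$ is consumed.
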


\begin{proof}[Proof of Theorem~\ref{thm:main}]
Let $V'$ be the set of variables that $\Phi$ depends on (see Section~\ref{sec:variance-CSP} for definitions).
By Theorem~\ref{thm:variance}, $\Varr{\Phi} \geq |V'| \cdot \beta_k / k$ for some absolute constant $\beta_k > 0$.
By Lemma~\ref{lem:bound-on-C}, the function $\Phi - \Exp \Phi$ satisfies condition (\ref{eq:Bonami-Fourth-Moment})
of the Bonami Lemma for the Efron---Stein Decomposition (Theorem~\ref{thm:bonami-for-ES}) with some absolute constant $C_k$. Hence,
$\|\Phi - \Exp \Phi\|_4^4 \leq 81^k C_k  \|\Phi - \Exp \Phi\|_2^4$. Applying Theorem~\ref{thm:AGKSY}
to the random variable $\Phi - \Exp \Phi$ with $\sigma = (|V'| \beta_k / k)^{1/2}$ and $b = 81^k C_k$,
we get that $\Prob{\Phi \geq \Exp \Phi + |V'|^{1/2}/\kappa_k} > 0$, where $\kappa_k^2 = 4 kC_k 81^k/\beta_k$. Consequently,
$$OPT = \max_{x\in [-1,1]^n} \Phi (x) \geq \Exp \Phi + |V'|^{1/2}/\kappa_k = AVG + |V'|^{1/2}/\kappa_k.$$

We are now ready to state the algorithm. The algorithm computes the Efron---Stein decomposition in time $O_k(m+n)$.
Then, using the formula $V' = \bigcup_{S: \Phi_S \neq 0} \set{x_i:i\in S}$ (see Theorem~\ref{thm:variance}), it
finds the set $V'$ also in time $O_k(m+n)$. It considers two cases.
\begin{enumerate}
\item If $|V'| \geq \kappa_k t^2$, then the algorithm returns $OPT \geq AVG + t$.
\item Otherwise, if $|V'| < \kappa_k t^2$, the algorithm outputs
the restriction of $\calI$ to the variables in $V'$. This is a kernel for $\calI$, since $\Phi$ depends only on the variables in $V'$.
\end{enumerate}
\vspace{-2mm}
\end{proof}
\vspace{-2mm}
To prove Theorem~\ref{thm:main-intro}, we need to show how to find an assignment satisfying $AVG + t$ constraints if $|V'| \geq \kappa_k t^2$.
This can be easily done using $4k$-rankwise independent permutations. A random permutation $\tilde{\alpha}$ is $m$-rankwise independent if
for every subset $M\subset \{1,\dots, n\}$ of size $m$, the order of elements in $M$ induced by $\tilde{\alpha}$ is uniformly distributed (the definition
is due to~Itoh, Takei, and Tarui~\cite{ITT}). Note that any $m$-wise independent permutation $\tilde{\alpha}$ is also an
$m$-rankwise independent permutation. Using the result of Alon and Lovett~\cite{AL}, we can obtain
a $4k$-wise independent permutation $\tilde{\alpha}$ supported on a set of size $n^{O(k)}$. In Lemma~\ref{lem:rankwise} (in Appendix~\ref{sec:rankwise}),
we show that for some permutation $\alpha^*$ in the support of $\tilde{\alpha}$, we have $\val_{\calI}(\alpha^*)\geq AVG +t$.
Hence, to find an assignment satisfying $AVG+t$ constraints, we need to search for the best permutation in the support of $\alpha^*$,
which can be done in time $n^{O(k)}$.

\section{Bonami Lemma} \label{sec:Bonami}
In this section, we prove the Bonami Lemma for the Efron---Stein decomposition (Theorem~\ref{thm:bonami-for-ES}) stated in the introduction.
Our starting point will be the standard Bonami Lemma for Bernoulli $\pm 1$ random variables.
\begin{lemma}[see~\cite{Bonami, ODonnell}]\label{lem:bonami-standard}
Let $f:\{-1,1\}^n \to {\mathbb R}$ be a polynomial of degree at most $d$. Let $X_1,\dots, X_n$ be independent
unbiased $\pm 1$-Bernoulli variables. Then
$$\E{f(X_1,\dots,X_n)^4} \leq 9^{d} \E{f(X_1,\dots,X_n)^2}^2.$$
\end{lemma}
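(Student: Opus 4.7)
The plan is to prove this classical Bonami inequality by induction on the number of variables $n$. Since each $X_i$ satisfies $X_i^2 = 1$, I would first replace $f$ by its multilinear reduction (sending $x_i^{2k}\mapsto 1$ and $x_i^{2k+1}\mapsto x_i$); this does not alter $f$ as a function on $\{-1,1\}^n$ and does not increase its degree. The base case $n=0$ is then immediate: $f$ is a constant, so both sides of the inequality equal $f^4$.

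For the inductive step, the natural move is to split $f$ according to its dependence on the first variable,
\[
 f(x_1, \dots, x_n) \;=\; g(x_2, \dots, x_n) \,+\, x_1\, h(x_2, \dots, x_n),
\]
where $g$ and $h$ are multilinear polynomials in the $n-1$ variables $x_2,\dots,x_n$ with $\deg g \le d$ and $\deg h \le d-1$. Using $x_1^2 = 1$, $\E{X_1} = 0$, and the independence of $X_1$ from $g$ and $h$, a direct expansion yields $\E{f^2} = \E{g^2} + \E{h^2}$ and
\[
 \E{f^4} \;=\; \E{g^4} + \E{h^4} + 6\,\E{g^2 h^2}.
\]
The cross term I would control by the Cauchy--Schwarz inequality, $\E{g^2 h^2} \le \E{g^4}^{1/2}\,\E{h^4}^{1/2}$.

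The final step is to apply the inductive hypothesis (both $g$ and $h$ live on $n-1$ variables with the stated degrees), giving $\E{g^4} \le 9^d \E{g^2}^2$ and $\E{h^4} \le 9^{d-1} \E{h^2}^2$. Writing $A = \E{g^2}$ and $B = \E{h^2}$, the three bounds combine to
\[
 \E{f^4} \;\le\; 9^d A^2 + 9^{d-1} B^2 + 6\sqrt{9^d\cdot 9^{d-1}}\,AB \;=\; 9^d A^2 + 9^{d-1} B^2 + 2\cdot 9^d\, AB,
\]
and since $9^{d-1} \le 9^d$, the right-hand side is at most $9^d(A+B)^2 = 9^d \E{f^2}^2$, closing the induction. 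I do not expect any real obstacle here; the one point that must be checked carefully is the arithmetic $6\sqrt{9\cdot 1} = 2\cdot 9$, which is exactly what makes the cross term fit the completed square $9^d(A+B)^2$, so the constant $9$ is precisely what the induction demands.
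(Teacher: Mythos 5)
Your proof is correct and is precisely the standard inductive argument for this lemma as given in the cited reference (O'Donnell, \emph{Analysis of Boolean Functions}); the paper does not reprove the lemma but simply cites it. All steps check out, including the multilinear reduction, the expansion $\E{f^4}=\E{g^4}+6\E{g^2h^2}+\E{h^4}$, and the arithmetic $6\sqrt{9^d\cdot 9^{d-1}}=2\cdot 9^d$ that completes the square.
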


We will consider the following probability distribution in this proof. Let $Z$ be a random variable equal to 3 with probability $1/4$
and to $-1$ with probability $3/4$. Denote by $\cal Z$ the probability distribution of $Z$.
We first prove a variant of the Bonami Lemma for random variables distributed according to $\cal Z$.
\begin{lemma}\label{lem:bonami-Z}
Let $f:\{-1,3\}^n \to {\mathbb R}$ be a polynomial of degree at most $d$. Let $Z_1,\dots, Z_n$ be independent random variables
distributed according to $\cal Z$. Then
$$\E{f(Z_1,\dots, Z_n)^4} \leq 81^{d} \E{f(Z_1,\dots, Z_n)^2}^2.$$
\end{lemma}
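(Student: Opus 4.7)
The plan is to reduce Lemma~\ref{lem:bonami-Z} to the standard Bonami Lemma (Lemma~\ref{lem:bonami-standard}) by realizing each random variable $Z_i$ as a low-degree polynomial in unbiased $\pm 1$ Bernoulli random variables. The distribution $\cal Z$ is asymmetric, but the probabilities $1/4$ and $3/4$ are both multiples of $1/4$, which strongly suggests that $Z$ can be encoded using two independent fair coin flips.

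Concretely, let $X_1, X_2$ be independent unbiased $\pm 1$ Bernoulli variables. I would check by direct case analysis that $W := X_1 + X_2 + X_1 X_2$ has exactly the distribution $\cal Z$: when $X_1 = X_2 = 1$ (probability $1/4$) we have $W = 3$, and in all other cases $W = -1$. Thus, introducing $2n$ independent fair Bernoulli variables $X_1,\dots,X_{2n}$ and letting
\begin{equation*}
W_i := X_{2i-1} + X_{2i} + X_{2i-1}X_{2i} \quad \text{for } i = 1, \dots, n,
\end{equation*}
the joint distribution of $(W_1,\dots,W_n)$ coincides with that of $(Z_1,\dots,Z_n)$ (independence across $i$ follows because the blocks of $X$'s are disjoint).

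Next I would substitute $W_i$ for $Z_i$ in $f$ to define $g(X_1,\dots,X_{2n}) := f(W_1,\dots,W_n)$. Since each $W_i$ is a polynomial in the $X_j$'s of degree $2$, and $f$ is a polynomial of degree at most $d$ in the $Z_i$'s, the function $g$ is a polynomial of degree at most $2d$ in the $X_j$'s. Moreover $\E{g(X)^p} = \E{f(Z)^p}$ for every $p$. Applying Lemma~\ref{lem:bonami-standard} to $g$ yields
\begin{equation*}
\E{f(Z_1,\dots,Z_n)^4} = \E{g(X_1,\dots,X_{2n})^4} \leq 9^{2d}\, \E{g(X_1,\dots,X_{2n})^2}^2 = 81^d\, \E{f(Z_1,\dots,Z_n)^2}^2,
\end{equation*}
which is the claimed bound.

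There is no serious obstacle here; the only nontrivial step is discovering the representation $Z \stackrel{d}{=} X_1 + X_2 + X_1 X_2$, after which the doubling of degree (from $d$ to $2d$) in the standard Bonami inequality exactly accounts for the change of base $9 \rightsquigarrow 81$ in the conclusion. Once this lemma is in hand, one expects the proof of Theorem~\ref{thm:bonami-for-ES} to proceed by replacing each coordinate of the underlying product space $(\Omega,\mu)$ by a pair of fair Bernoullis via a similar encoding and exploiting the Efron---Stein structure together with the hypothesis~(\ref{eq:Bonami-Fourth-Moment}) to control the per-component fluctuations.
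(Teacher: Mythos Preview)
Your proof is correct and essentially identical to the paper's: the paper uses exactly the same representation $\tilde Z_i = Y_i' + Y_i'' + Y_i' Y_i''$ with two fair Bernoullis, obtains a polynomial of degree at most $2d$, and applies Lemma~\ref{lem:bonami-standard} to get the factor $9^{2d} = 81^d$. Your closing speculation about Theorem~\ref{thm:bonami-for-ES} is off, however---the paper does not encode $\Omega$ by Bernoullis but instead introduces an auxiliary polynomial $M_f(Z_1,\dots,Z_n)$ in the $\calZ$-variables with coefficients $M_{f,S} = 3^{-|S|/2}\E{f_S^2}^{1/2}$, shows $\E{M_f^2} = \E{f^2}$ and $\E{f^4} \leq C\,\E{M_f^4}$ term-by-term, and then applies Lemma~\ref{lem:bonami-Z} to $M_f$.
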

\begin{proof}
Consider $2n$ Bernoulli random variables $Y_1',\dots, Y_{n}',Y_1'',\dots, Y_n''$ (uniformly distributed in $\{-1,1\}$).
Note that random variables $\tilde Z_i \equiv Y_i' + Y_i'' + Y_i' Y_i''$ are distributed in the same way as random variables $Z_i$. Therefore,
\begin{align*}
\E{f(Z_1,\dots, Z_n)^4} &= \E{f(\tilde Z_1,\dots, \tilde Z_n)^4},\\
\E{f(Z_1,\dots, Z_n)^2}^2 &= \E{f(\tilde Z_1,\dots, \tilde Z_n)^2}^2.
\end{align*}
Now $f(\tilde Z_1,\dots, \tilde Z_n)$ is a polynomial of $\pm 1$ variables $Y_1',\dots, Y_{n}',Y_1'',\dots, Y_n''$ of degree at most $2d$.
Applying Lemma~\ref{lem:bonami-standard} to $f(\tilde Z_1,\dots, \tilde Z_n)$, we get
$$\E{f(\tilde Z_1,\dots, \tilde Z_n)^4} \leq 9^{2d} \E{f(\tilde Z_1,\dots, \tilde Z_n)^2}^2,$$
and, therefore,
$$\E{f(Z_1,\dots, Z_n)^4} \leq 81^d \E{f(Z_1,\dots, Z_n)^2}^2,$$
as required.
\end{proof}

\noindent Now let $f \in L_2(\Omega^n, \mu^n)$ and $f = \sum_{S} f_S$ be its Efron---Stein decomposition. Define polynomial
$M_f\colon{\mathbb R}^n \to {\mathbb R}$ by
\begin{align*}
M_{f,S} &= (1/3)^{|S|/2} \E{f_S^2}^{1/2} \quad\text{ for every } S\subset \{1,\dots,n\},\\
M_f(Z_1,\dots, Z_n) &= \sum_{S\subset \{1,\dots,n\}} M_{f,S}\prod_{i\in S} Z_i.
\end{align*}
We now get bounds for moments of $f(X_1,\dots, X_n)$ in terms of moments of $M_f(Z_1,\dots,Z_n)$.
\begin{claim}\label{claim:second-moment}
Note that $\E{Z_i} = 0$ and thus $\E{M_f(Z_1,\dots,Z_n)} = M_{f,\varnothing} = |f_{\varnothing}| = |\E{f}|$.

Also,
$$\E{M_f(Z_1,\dots,Z_n)^2} = \sum_{S\subset \{1,\dots,n\}} M_{f,S}^2 \cdot \Exp[\prod_{i\in S}Z_i^2].$$
\end{claim}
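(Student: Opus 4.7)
The statement is essentially an exercise in computing moments of a multilinear form in independent mean-zero random variables, so the plan is purely a direct calculation.

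First, I would verify the claim that $\E{Z_i} = 0$ by a one-line check: $\E{Z} = 3 \cdot \tfrac14 + (-1)\cdot \tfrac34 = 0$. This is the only property of the distribution $\cal Z$ needed for the first moment claim.

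For the mean, I would use linearity of expectation to write
\begin{equation*}
\E{M_f(Z_1,\dots,Z_n)} = \sum_{S\subset\{1,\dots,n\}} M_{f,S}\, \E{\textstyle\prod_{i\in S} Z_i}.
\end{equation*}
By independence of the $Z_i$'s, $\E{\prod_{i\in S} Z_i} = \prod_{i\in S}\E{Z_i}$, which vanishes for every nonempty $S$ since $\E{Z_i}=0$. Only the term $S = \varnothing$ survives, yielding $M_{f,\varnothing}$. To identify this with $|\E{f}|$, I note that $f_\varnothing$ is the projection of $f$ onto the space of constants, hence $f_\varnothing = \E{f}$ is a constant random variable, so $\E{f_\varnothing^2}^{1/2} = |\E{f}|$, and $M_{f,\varnothing} = (1/3)^0 \cdot |\E{f}| = |\E{f}|$.

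For the second moment, I would expand the square and regroup:
\begin{equation*}
\E{M_f(Z_1,\dots,Z_n)^2} = \sum_{S,T} M_{f,S}M_{f,T}\, \E{\textstyle\prod_{i\in S} Z_i \prod_{j\in T} Z_j}.
\end{equation*}
Grouping by multiplicity and using independence, the cross-expectation factors as $\prod_{i\in S\cap T}\E{Z_i^2}\cdot \prod_{i\in S\triangle T}\E{Z_i}$. Since $\E{Z_i}=0$, any term with $S\neq T$ (i.e., $S\triangle T\neq\varnothing$) vanishes. Only the diagonal $S=T$ contributes, giving the claimed identity $\E{M_f^2} = \sum_S M_{f,S}^2\,\E{\prod_{i\in S}Z_i^2}$.

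There is no real obstacle here: the proof is a one-shot application of independence together with $\E{Z_i}=0$. The only thing worth being careful about is distinguishing the two roles of $Z_i$ (appearing with exponent one in the ``cross'' terms, where mean-zero kills them, versus squared in the diagonal, where independence lets the expectation factor as $\prod_{i\in S}\E{Z_i^2}$ — a form that will be useful later in the Bonami proof when one inserts the explicit value $\E{Z_i^2} = 3$).
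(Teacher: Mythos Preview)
Your proposal is correct and follows essentially the same approach as the paper: both use linearity of expectation, independence of the $Z_i$, and $\E{Z_i}=0$ to kill all nondiagonal terms. Your write-up is in fact more explicit than the paper's, which largely states the conclusions without spelling out the orthogonality argument.
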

\begin{proof}
Note that $\E{Z_i} = 0$ and thus $\E{M_f(Z_1,\dots,Z_n)} = M_{f,\varnothing} = |f_{\varnothing}| = |\E{f}|$.

Also,
$$\E{M_f(Z_1,\dots,Z_n)^2} = \sum_{S\subset \{1,\dots,n\}} M_{f,S}^2 \cdot \Exp[\prod_{i\in S}Z_i^2].$$
Since $\E{Z_i^2} = 3$, we have $\E{\prod_{i\in S}Z_i^2} = 3^{|S|}$, and therefore
$$\E{M_f(Z_1,\dots,Z_n)^2} = \sum_{S\subset \{1,\dots,n\}} \Bigl((1/3)^{|S|/2} \E{f_S^2}^{1/2}\Bigr)^2 \cdot 3^{|S|}= \sum_{S\subset \{1,\dots,n\}} \E{f_S^2} = \E{f^2}.$$
\end{proof}

\begin{claim}\label{claim:fourth-moment}
Let $f$ and $C$ be as in the condition of Theorem~\ref{thm:bonami-for-ES}. Then
$$\E{f(X_1,\dots,X_n)^4} \leq C\, \E{M_f(Z_1,\dots, Z_n)^4}.$$
\end{claim}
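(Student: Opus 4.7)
The plan is to expand both sides as multilinear sums over quadruples of subsets and compare them term by term. Concretely, I will write
\[
\E{f^4} = \sum_{S_1,S_2,S_3,S_4} \E{f_{S_1} f_{S_2} f_{S_3} f_{S_4}}
\]
and, using the product structure of $M_f$ together with independence of the $Z_i$,
\[
\E{M_f^4} = \sum_{S_1,S_2,S_3,S_4} \Bigl(\prod_{j=1}^4 M_{f,S_j}\Bigr)\prod_{i=1}^n \E{Z_i^{m_i}},
\]
where $m_i = m_i(S_1,\dots,S_4) = |\{j : i \in S_j\}|$. The first observation is that a quadruple contributes non-trivially to $\E{f^4}$ only when every index $i$ satisfies $m_i \neq 1$: this is exactly Property~3 of the Efron--Stein decomposition. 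The same quadruples are the only ones that contribute to $\E{M_f^4}$, because $\E{Z_i}=0$ forces the factor to vanish whenever some $m_i=1$. Thus both sums range over the same index set.

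Next, for each such contributing quadruple I will bound the Efron--Stein term by H\"older's inequality and the hypothesis~(\ref{eq:Bonami-Fourth-Moment}):
\[
|\E{f_{S_1} f_{S_2} f_{S_3} f_{S_4}}| \;\leq\; \prod_{j=1}^4 \|f_{S_j}\|_4 \;\leq\; C \prod_{j=1}^4 \|f_{S_j}\|_2.
\]
The corresponding term of $\E{M_f^4}$, after substituting $M_{f,S_j} = 3^{-|S_j|/2}\|f_{S_j}\|_2$ and using $\sum_j |S_j| = \sum_i m_i$, equals
\[
\Bigl(\prod_{j=1}^4 \|f_{S_j}\|_2\Bigr) \cdot \prod_{i=1}^n \frac{\E{Z_i^{m_i}}}{3^{m_i/2}}.
\]
So the desired inequality will follow once I verify that $\E{Z^m} \geq 3^{m/2}$ for every $m \in \{0,2,3,4\}$, since those are the only exponents that appear in contributing quadruples.

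This last verification is the one computational step: direct evaluation gives $\E{Z^0}=1$, $\E{Z^2}=3$, $\E{Z^3}=6 \geq 3\sqrt3 \approx 5.196$, and $\E{Z^4}=21 \geq 9$; all factors are non-negative, so $\prod_i \E{Z_i^{m_i}}/3^{m_i/2} \geq 1$ for every contributing quadruple. Combining all this,
\[
\E{f^4} \;\leq\; \sum_{\text{contrib.}} C \prod_j \|f_{S_j}\|_2 \;\leq\; C \sum_{\text{contrib.}} \Bigl(\prod_j M_{f,S_j}\Bigr)\prod_i \E{Z_i^{m_i}} \;=\; C\,\E{M_f^4}.
\]

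The only subtle point is that $\E{f_{S_1}f_{S_2}f_{S_3}f_{S_4}}$ may be negative while the corresponding term of $\E{M_f^4}$ is non-negative (since $\E{Z^m}\geq 0$ for all $m\geq 0$ under $\calZ$); this is handled by taking absolute values before invoking H\"older. A secondary concern is the odd-moment case $m_i=3$, where one might worry about a sign loss; but $\E{Z^3}=6>0$ and $6\geq 3^{3/2}$, so this case causes no difficulty. No other obstacle arises, and the argument does not depend on the original space $(\Omega,\mu)$ beyond the Efron--Stein orthogonality/vanishing properties already established.
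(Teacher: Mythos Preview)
Your proposal is correct and follows essentially the same approach as the paper: expand both fourth moments over quadruples $(S_1,S_2,S_3,S_4)$, use Property~3 of the Efron--Stein decomposition and $\E{Z}=0$ to discard quadruples with some $m_i=1$, bound each surviving Efron--Stein term by $\prod_j\|f_{S_j}\|_4$ (the paper does this via three Cauchy--Schwarz steps, you via H\"older, which is equivalent), apply hypothesis~(\ref{eq:Bonami-Fourth-Moment}), and then verify $\E{Z^m}\geq 3^{m/2}$ for $m\in\{0,2,3,4\}$ to absorb the factor $3^{-\sum_j|S_j|/2}$. Your explicit handling of signs via absolute values is a small clarification the paper leaves implicit.
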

\begin{proof}
Write,
\begin{align*}
\E{f^4} &= \sum_{S_1,S_2,S_3,S_4} \E{f_{S_1} f_{S_2} f_{S_3} f_{S_4}},\\
\E{M_f^4} &= \sum_{S_1,S_2,S_3,S_4} M_{f,S_1}M_{f,S_2}M_{f,S_3}M_{f,S_4}
\E{\prod_{i_1\in S_1} Z_{i_1} \prod_{i_2\in S_2} Z_{i_2} \prod_{i_3\in S_3} Z_{i_3} \prod_{i_4\in S_4} Z_{i_4}}.
\end{align*}

To prove the claim, we show that for every four sets $S_1,S_2,S_3,S_4$, the following inequality holds:
\begin{equation}\label{enq:term-by-term}
\E{f_{S_1} f_{S_2} f_{S_3} f_{S_4}} \leq C M_{f,S_1} M_{f,S_2} M_{f,S_3} M_{f,S_4}
\E{\prod_{i_1\in S_1} Z_{i_1} \prod_{i_2\in S_2} Z_{i_2} \prod_{i_3\in S_3} Z_{i_3} \prod_{i_4\in S_4} Z_{i_4}}.
\end{equation}
Note first that if some index $j$ appears in exactly one of the sets $S_1$, $S_2$, $S_3$, and $S_4$ then the expressions on the left and on the right
are equal to $0$ (by Property 3 of the Efron---Stein decomposition in Section~\ref{sec:EfronStein}), and we are done. So we assume that every index $j$ in $S_1 \cup S_2 \cup S_3 \cup S_4$ appears in at least 2 of the sets $S_i$. Denote the number of times $j$ appears in sets $S_i$ by $m(j)$.

Applying the Cauchy---Schwarz inequality three times, we get,
\begin{align}
\E{f_{S_1}f_{S_2}f_{S_3}f_{S_4}} &\leq \E{(f_{S_1}f_{S_2})^2}^{1/2} \E{(f_{S_3}f_{S_4})^2}^{1/2}
= \left(\E{f_{S_1}^2 \cdot f_{S_2}^2} \E{f_{S_3}^2 \cdot f_{S_4}^2}\right)^{1/2}\\
&\leq \left(\E{f_{S_1}^4}^{1/2} \E{f_{S_2}^4}^{1/2} \E{f_{S_3}^4}^{1/2} \E{f_{S_4}^4}^{1/2}\right)^{1/2}\\
&= \|f_{S_1}\|_4 \cdot \|f_{S_2}\|_4 \cdot  \|f_{S_3}\|_4 \cdot \|f_{S_4}\|_4.
\end{align}
By the condition of Theorem~\ref{thm:bonami-for-ES} and the definition of coefficients $M_{f,S_i}$,
\begin{align}\label{eq:f-product}
\E{f_{S_1}f_{S_2}f_{S_3}f_{S_4}} &\leq \|f_{S_1}\|_4 \cdot \|f_{S_2}\|_4 \cdot  \|f_{S_3}\|_4 \cdot \|f_{S_4}\|_4\\
&\leq C \left(\E{f_{S_1}^2}\E{f_{S_2}^2}\E{f_{S_3}^2}\E{f_{S_4}^2}\right)^{1/2} \\
&=
C \cdot 3^{(|S_1| + |S_2| + |S_3| + |S_4|)/2} M_{f,S_1} M_{f,S_2} M_{f,S_3} M_{f,S_4}. \notag
\end{align}
We also have,
\begin{align*}
\E{\prod_{i_1\in S_1} Z_{i_1} \prod_{i_2\in S_2} Z_{i_2} \prod_{i_3\in S_3} Z_{i_3} \prod_{i_4\in S_4} Z_{i_4}}  &=
\E{\prod_{r=1}^4\prod_{i\in S_r} Z_i}
= \prod_{i\in \bigcup_r S_r}
\E{Z_i^{m(i)}} \\
&=
\prod_{r=1}^4\prod_{i\in S_r} \Big(\E{Z_i^{m(i)}}\Big)^{1/m(i)}.
\end{align*}
We compute $\E{Z_i^{m(i)}}$ for $m(i)\in\{2,3, 4\}$. We get $\Exp Z_i^2 = 3$, $\Exp Z_i^3 = 6$ and $\Exp Z_i^4 = 21$.
Thus, $(\Exp Z_i^2)^{1/2} = \sqrt{3}$, $(\Exp Z_i^3)^{1/3} = \sqrt[3]{6}> \sqrt{3}$ and $(\Exp Z_i^4)^{1/4} = \sqrt[4]{21} > \sqrt{3}$,
and, consequently,
\begin{equation}\label{eq:Z-product}
\E{\prod_{r=1}^4\prod_{i\in S_r} Z_i} \geq \prod_{r=1}^4 \prod_{i\in S_r} 3^{1/2} = 3^{(|S_1| + |S_2| + |S_3| + |S_4|)/2}.
\end{equation}
Since all coefficients $M_{f,S}$ are non-negative, we get from (\ref{eq:f-product}) and (\ref{eq:Z-product})  that  inequality (\ref{enq:term-by-term}) holds.
\end{proof}

\begin{proof}[Proof of Theorem~\ref{thm:bonami-for-ES}]
By Lemma~\ref{lem:bonami-Z}, we have
$$\|M_f(Z_1,\dots,Z_n)\|_4^4 \leq 81^d \|M_f(Z_1,\dots,Z_n)\|_2^4.$$
From Claims~\ref{claim:second-moment} and~\ref{claim:fourth-moment}, we get
$$\E{f(X_1,\dots,X_n)^4} \leq C\E{M_f(Z_1,\dots,Z_n)^4} \leq 81^d C \E{M_f(Z_1,\dots,Z_n)^2}^2  =
81^d C \E{f(X_1,\dots, X_n)^2}^2.$$
\end{proof}

\section{General Framework}\label{sec:general-framework}
\subsection{Filtered $A$-Lattice of Functions}
In this section, we generalize the result of the paper to a more general class of constraint satisfaction problems having a lattice structure. In Section~\ref{sec:PPP},
we show that LP CSPs and valued CSPs with ``piece-wise polynomial predicates'' (see Section~\ref{sec:PPP} for the defintion) have a lattice structure.

\subsection{Discussion}
We note that in our proofs we used only few properties of ordering CSPs. Specifically, in Theorem~\ref{thm:ES-ordering-predicate}, we showed that
all functions in the Efron---Stein decomposition of the basic ordering predicate are in the set
$$
\calF^{\mathrm{ord}}_{d} = \set{\sum_{\tau}\phi_{\tau}(x_1,\dots, x_k) \frac{q_{S,\tau}(x_1,\dots, x_n)}{2^{d} d!}}.
$$
Since $\calF^{\mathrm{ord}}_{d}$ is closed under addition (the sum of any two functions in $\calF^{\mathrm{ord}}_{d}$ is in $\calF^{\mathrm{ord}}_{d}$), we got that
all functions in the Efron---Stein decomposition of the ordering CSP objective $\Phi$ are also in $\calF^{\mathrm{ord}}_{d}$.
Then in the proof of  Lemma~\ref{lem:beta}, we showed that every non-zero function in $\calF^{\mathrm{ord}}_{d}$  has
variance at least $\beta_k$ (where $\beta_k$ depends only on $k$), and this was sufficient to get the result of the paper.
To summarize, we only used the following properties of the set of functions $\calF^{\mathrm{ord}}_{d}$:
\begin{itemize}
\item[A.] all functions in the Efron---Stein decomposition of each predicate are in $\calF^{\mathrm{ord}}_{d}$,
\item[B.] $\calF^{\mathrm{ord}}_{d}$ is closed under addition,
\item[C.] every non-zero function in $\calF^{\mathrm{ord}}_{d}$ has variance at least $\beta$ (for some fixed $\beta > 0$).
\end{itemize}
\subsection{Filtered $A$-Lattice of Functions}
We now formalize properties A, B, and C in the definitions of $A$-lattice of functions and filtered $A$-lattice of functions.
Recall first the definition of a lattice.
\begin{definition}
Let $V$ be a finite-dimensional space and $\calL$ be a subset of $V$. We say that $\calL$ is a lattice in $V$ if
for some basis $v_1,\dots, v_r$ of $V$, we have $\calL = \set{\sum_{i=1}^r a_i v_i:a_1,\dots, a_r \in \bbZ}$.
We say that $v_1,\dots, v_r$ is the basis of the lattice $\calL$.
\end{definition}

Now we define an $A$-lattice of functions.
\begin{definition}\label{def:A-lattice}
Let $(\Omega,\mu)$ be a probability space.
Consider a set $\calF$ of bounded (real-valued) functions $f(x_1,\dots, x_k)$  on $\Omega$; $\calF \subset L_{\infty}(\Omega^k)$.
We say that $\calF$ is an $A$-lattice of functions of arity (at most) $k$ on $\Omega$ if it satisfies the following properties.
\begin{enumerate}
\item $\calF$ is a lattice in a finite dimensional subspace of $L_{\infty}(\Omega^k)$.
\item If we permute arguments of a function in $\calF$, we get a function in $\calF$. Specifically, if $f\in \calF$ and $\pi$ is a permutation of $\set{1,\dots, k}$ then $g(x_1,\dots, x_k) = f(x_{\pi(1)},\dots, x_{\pi(k)}) \in\calF$.
\end{enumerate}

For an $A$-lattice $\calF$, we write that a function $f\in_R \calF$ if $f$ is in $\calF$ after possibly renaming the arguments of $f$
(in other words, $f$ is in $\calF$ as an abstract function from $\Omega^k$ to $\bbR$).\footnote{For example, let $\calF$ be an $A$-lattice of
functions of the form $ax_1 + bx_2$ where $a,b\in \bbZ$. Then $x_3 + 5 x_7 \in_R \calF$, since after renaming $x_3$ to $x_1$ and $x_7$ to $x_2$ we get
$x_3 + 5x_2$, which is of the form  $ax_1 + bx_2$.}
\end{definition}
Clearly, every $A$-lattice $\calF$ of functions satisfies property~B. Since $\calF$ is discrete, it also satisfies property~C (we will prove that formally in
Claim~\ref{claim:compact-A-CSP}). We also want to ensure that it satisfies an analog of property~A.
To this end, we consider the averaging operator $A_i$, which takes the expectation of a function with respect to variable $x_i$  and require
that $A_i$ maps every function in the lattice to a function in the lattice.

\begin{definition}
For $i\in \set{1,\dots, k}$, let $A_i$ be the averaging operator that maps a function $f$ of arity $k$ to a function $A_i f$ of arity $k-1$ defined as follows:
\begin{align*}
A_i f(x_1,\dots,&\, x_{i-1},x_{i+1},\dots, x_k) = \int_{\Omega} f(x_1,\dots, x_k) \,d\mu(x_i)\\
&= \Exp{f(X_1,\dots, X_k|X_1 = x_1,\dots, X_{i-1} = x_{i-1},X_{i+1} = x_{i+1},\dots, X_k = x_k)}.
\end{align*}
\end{definition}
\begin{definition}
We say that a family of sets $\{\calF_{\alpha}\}_{\alpha}$ (indexed by integer $\alpha\geq 1$) is a filtered $A$-lattice of
functions of arity (at most) $k$ if it satisfies the following properties.
\begin{enumerate}
\item $\calF_{\alpha}$ is an $A$-lattice of functions of arity $k$ on $\Omega$.
\item $\set{\calF_{\alpha}}$ is a filtration: $\calF_{\alpha} \subset \calF_{\alpha'}$ for $\alpha \leq \alpha'$.
\item For every $\alpha$ there exists $\alpha'$, which we denote by $\alpha' = a(\alpha)$, such that the operator $A_i$ maps $\calF_{\alpha}$ to $\calF_{\alpha'}$ (for every $i\in \set{1,\dots, k}$).
\end{enumerate}
\end{definition}
We remark  that $\{\calF^{\mathrm{ord}}_{d}\}$ is a filtered $A$-lattice.
We are going to prove that our result for ordering CSPs holds, in fact, for any constraint satisfaction problem with predicates from a filtered $A$-lattice.

\subsection{General $A$-CSP($\calF_{\alpha}$, $\alpha$)}
\begin{definition}
Consider a probability space $(\Omega, \mu)$. Let $\calF_{\alpha}$ be a filtered $A$-lattice of functions and $\alpha_0$ is an integer.
An instance $\calI$ of General $A$-CSP($\calF_{\alpha}$,  $\alpha_0$) consists of a set of variables $x_1,\dots, x_n$, taking values in $\Omega$,
and a set of real-valued constraints of the form $f(x_{i_1}, \dots, x_{i_k})$  where $f\in_R \calF_{\alpha}$. The objective function $\Phi(x_1,\dots, x_n)$ is the sum of
all the constraints.
General $A$-CSP($\calF_{\alpha}$, $\alpha_0$) asks to find an assignment to variables $x_1, \dots, x_n$ that maximizes
$\Phi(x_1,\dots, x_n)$.
\end{definition}
We denote the optimal value of an instance $\cal I$ by $OPT = \esssup_{x_1,\dots, x_n \in \Omega} \Phi(x_1,\dots, x_n)$ and the average
by $AVG = \E{\Phi(X_1,\dots, X_n)}$, where $X_1,\dots, X_n$ are independent random elements of $\Omega$ distributed according to the probability measure $\mu$.
\begin{remark}\label{rem:measure-0}
Note that we follow the standard convention that two functions $f,g \in L_{\infty}(\Omega^k)$ are equal if $f(x_1,\dots, x_k) \neq g(x_1,\dots,x_k)$ on a set of measure $0$.
That is, we identify functions that are equal almost everywhere. Accordingly, we define $OPT$ as the essential supremum  of $\Phi$:
 $OPT$ is equal to the maximum value of $M$ such that
$$\Prob{\Phi \geq M - \varepsilon} > 0,$$
for every $\varepsilon > 0$.
\end{remark}
We prove a counterpart of Theorem~\ref{thm:main} for the General $A$-CSP problem.

\begin{theorem}\label{thm:main-A-CSP}
There is an algorithm that given an instance of General $A$-CSP($\calF_{\alpha}$, $k$, $\alpha_0$) and a parameter $t$,
either finds a kernel on at most $\kappa t^2$ variables (where $\kappa$ depends only on the filtered $A$-lattice $\set{\calF_{\alpha}}$ and numbers $k, \alpha_0$)
or
certifies that $OPT \geq  AVG + t$.
The algorithm runs in time $O(m+n)$, linear in the number of constraints $m$ and variables $n$ (the coefficient in the $O$-notation depends the filtered $A$-lattice $\set{\calF_{\alpha}}$ and numbers $k, \alpha_0$).

 We assume that computing the sum of two functions in  $\calF_{\alpha}$ requires constant time and that computing $A_i f$ requires constant time
(the time may depend on $\alpha$).
\end{theorem}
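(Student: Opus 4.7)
The plan is to mirror the proof of Theorem~\ref{thm:main}, replacing each ingredient that was specific to ordering predicates with the corresponding abstract fact about a filtered $A$-lattice. Concretely, I would prove the three statements that correspond to Corollary~\ref{cor:ES-objective}, Lemma~\ref{lem:beta}, and Lemma~\ref{lem:bound-on-C}, and then run the argument of Theorem~\ref{thm:main} verbatim.

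First, I would determine an index $\alpha^*$ such that every term $\Phi_S$ of the Efron---Stein decomposition of the objective $\Phi$ of an instance of General $A$-CSP($\calF_{\alpha}$, $\alpha_0$) belongs to $\calF_{\alpha^*}$ (after renaming variables, i.e.\ $\Phi_S \in_R \calF_{\alpha^*}$). Using formula (\ref{eq:ES-1}), the conditional expectation $f_{\subset T}$ of a predicate $f \in \calF_{\alpha_0}$ of arity at most $k$ is obtained by applying the averaging operators $A_i$ for $i \notin T$; by the third axiom of a filtered $A$-lattice, after at most $k$ applications of $A_i$ we land in $\calF_{a^{(k)}(\alpha_0)}$, where $a^{(k)}$ denotes the $k$-fold iterate of $a(\cdot)$. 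The alternating sum (\ref{eq:ES-2}) then lies in the same $\calF_{\alpha^*}$ with $\alpha^* := a^{(k)}(\alpha_0)$, because $\calF_{\alpha^*}$ is closed under addition (it is a lattice). Finally, since $\Phi$ is a sum of predicates and Efron---Stein decomposition is linear, every $\Phi_S$ lies in $\calF_{\alpha^*}$ (up to renaming of arguments). The degree of the decomposition is at most $k$ because each predicate depends on at most $k$ variables.

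Next, I would prove the analog of Lemma~\ref{lem:beta}: there is a constant $\beta = \beta(\set{\calF_\alpha}, k, \alpha_0) > 0$ such that for every non-zero $h \in \calF_{\alpha^*}$ of arity at most $k$, $\E{h^2} \geq \beta$. This replaces the compactness argument in Claim~\ref{claim:compact}. Fix a basis $v_1, \dots, v_r$ of the ambient finite-dimensional subspace spanned by $\calF_{\alpha^*}$; every $h \in \calF_{\alpha^*}$ has the form $h = \sum_i a_i v_i$ with $a_i \in \bbZ$. Then $\|h\|_2^2$ is a positive-definite quadratic form in the integer vector $(a_1, \dots, a_r)$, so it is bounded below by $\lambda_{\min} \sum_i a_i^2 \geq \lambda_{\min}$ whenever $h \neq 0$, where $\lambda_{\min} > 0$ is the smallest eigenvalue of the Gram matrix $(\E{v_i v_j})_{i,j}$. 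Taking $\beta = \lambda_{\min}$ (and using the symmetry axiom to absorb the ``up to renaming'' part by enlarging $r$ if necessary) gives the desired bound. The variance lower bound $\Var \Phi \geq (\nu / k)\beta$ whenever $\Phi$ depends on at least $\nu$ variables then follows exactly as in Theorem~\ref{thm:variance}.

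Then I would verify the Bonami hypothesis (\ref{eq:Bonami-Fourth-Moment}): there is a constant $C = C(\set{\calF_\alpha}, k, \alpha_0)$ such that $\E{\Phi_S^4} \leq C \E{\Phi_S^2}^2$ for every $S$. Since $\Phi_S$ ranges over a fixed finite-dimensional real subspace (the span of $\calF_{\alpha^*}$), both $\|\cdot\|_2$ and $\|\cdot\|_4$ are norms on that subspace and are therefore equivalent; define $C$ as the maximum of $\|h\|_4^4$ over the compact unit $\|\cdot\|_2$-sphere in that subspace. This is exactly the proof of Lemma~\ref{lem:bound-on-C} with $\calQ$ replaced by $\spn(\calF_{\alpha^*})$. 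With Theorems~\ref{thm:bonami-for-ES} and~\ref{thm:AGKSY} in hand, the argument of Theorem~\ref{thm:main} now applies verbatim: letting $V'$ be the set of variables on which $\Phi$ depends and $\kappa = \kappa(\set{\calF_\alpha}, k, \alpha_0)$ a suitable constant built from $\beta$, $C$, and $81^k$, we get either $|V'| < \kappa t^2$ (yielding a kernel on the variables of $V'$) or $OPT \geq AVG + t$. The running time analysis is identical; the assumed constant-time cost of adding functions in $\calF_\alpha$ and of applying $A_i$ is exactly what is needed to compute the Efron---Stein decomposition and identify $V'$ in $O(m + n)$ time.

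The main conceptual step is the first one --- translating from ordering-specific closure under averaging to the abstract filtered $A$-lattice framework; everything else is a routine compactness/finite-dimensionality argument that the problem was designed to allow.
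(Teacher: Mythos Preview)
Your proposal is correct and follows essentially the same route as the paper: the paper packages your three steps as Lemma~\ref{lem:ES-general-predicate}, Claim~\ref{claim:compact-A-CSP}, and Lemma~\ref{lem:bound-on-C-A-CSP}, and then combines them via Lemma~\ref{lem:variance-A-CSP} exactly as you do. One small technical point: since $a(\alpha)\geq\alpha$ is not assumed, the paper takes $\tilde\alpha=\max_{0\leq j\leq k} a^{(j)}(\alpha_0)$ rather than $a^{(k)}(\alpha_0)$, so that all the intermediate $f_{\subset T}$ (which live at different levels of the filtration) are simultaneously in $\calF_{\tilde\alpha}$.
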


We first prove analogues of Theorem~\ref{thm:ES-ordering-predicate} and Corollary~\ref{cor:ES-objective} for General $A$-CSP.
\begin{lemma}[cf. Theorem~\ref{thm:ES-ordering-predicate}]\label{lem:ES-general-predicate}
Consider a probability space $(\Omega,\mu)$ and a filtered $A$-lattice $\set{\calF_{\alpha}}$.
For every $\alpha$, there exists $\tilde\alpha$ so that the following holds. For every $f \in \calF_{\alpha}$ and every $S\subset \set{1,\dots, k}$,
$f_S \in \calF_{\tilde\alpha}$.
\end{lemma}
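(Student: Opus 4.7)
The plan is to unwind the explicit formula~(\ref{eq:ES-1})--(\ref{eq:ES-2}) for the Efron--Stein decomposition. For $T\subset S$, the function $f_{\subset T}=\E{f\mid X_i,\,i\in T}$ is obtained by integrating $f$ against $\mu$ in each of the variables $x_i$ with $i\in\{1,\dots,k\}\setminus T$, i.e., by applying the operator $A_i$ for each such $i$ in any order (the $A_i$'s commute by Fubini). Thus $f_{\subset T}$ is a composition of exactly $k-|T|$ averaging operators applied to $f$, and $f_S=\sum_{T\subset S}(-1)^{|S\setminus T|}f_{\subset T}$ is then an integer linear combination of the $f_{\subset T}$'s.

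I would then show by induction on $j$ that if $f\in\calF_\alpha$, then every $j$-fold composition $A_{i_1}\cdots A_{i_j}f$ lies in $\calF_{a^{(j)}(\alpha)}$, where $a^{(j)}$ denotes the $j$-fold iterate of the function $a$ from property~3 of a filtered $A$-lattice. The base case $j=0$ is immediate, and the inductive step is a one-line application of property~3: $A_i$ maps $\calF_\beta$ into $\calF_{a(\beta)}$ for every $\beta$ and every $i\in\{1,\dots,k\}$. Specializing to $j=k-|T|$ gives $f_{\subset T}\in\calF_{a^{(k-|T|)}(\alpha)}$ for each $T\subset S$.

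Finally, I would set $\tilde\alpha=\max_{0\le j\le k}a^{(j)}(\alpha)$ and invoke the filtration property ($\calF_\beta\subset\calF_{\tilde\alpha}$ whenever $\beta\le\tilde\alpha$) to conclude that every $f_{\subset T}$, $T\subset S$, lies in the single lattice $\calF_{\tilde\alpha}$. Since $\calF_{\tilde\alpha}$ is by definition a lattice in a finite-dimensional subspace of $L_\infty(\Omega^k)$, it is closed under $\bbZ$-linear combinations, so the alternating sum $f_S\in\calF_{\tilde\alpha}$, as required. Note that $\tilde\alpha$ depends only on $\alpha$ and $k$ (and on the filtered $A$-lattice data), not on $f$ or on $S$.

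There is no real obstacle here: the argument is a routine combination of property~3 (used at most $k$ times to control the repeated averaging that builds each $f_{\subset T}$) with the lattice structure (used once to absorb the alternating sum). The only subtle point worth a moment of care is that the function $a$ is not a priori monotone, so one cannot simply take $\tilde\alpha=a^{(k)}(\alpha)$; taking the maximum of the finitely many integers $a^{(j)}(\alpha)$ for $0\le j\le k$ and then appealing to the filtration property sidesteps this cleanly.
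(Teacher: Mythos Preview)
Your proposal is correct and follows essentially the same approach as the paper: express $f_{\subset T}$ as a composition of $k-|T|$ averaging operators applied to $f$, use property~3 iteratively to place it in $\calF_{a^{(k-|T|)}(\alpha)}$, take $\tilde\alpha=\max_{0\le j\le k}a^{(j)}(\alpha)$ and use the filtration to put all $f_{\subset T}$ in a common $\calF_{\tilde\alpha}$, then use the lattice property for the alternating sum. Your remark about the possible non-monotonicity of $a$ is exactly why the paper (like you) takes the maximum over all $j\le k$ rather than just $a^{(k)}(\alpha)$.
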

\begin{proof}
Let $\alpha_0 = \alpha$, $\alpha_1 = a(\alpha_0)$ (where $a$ is as in the definition of a filtered $A$-lattice), $\alpha_2 = a(\alpha_1)$, and so on;
$\alpha_i = a(\alpha_{i-1})$. Let $\tilde \alpha = \max_{i\in \set{0,\dots, k}} \alpha_i$.
Consider a function $f \in\calF_{\alpha}$ and a set $T\subset\set{1,\dots,k}$. By~(\ref{eq:ES-1}),
$$f_{\subset T} = \E{f(X_1,\dots, X_n)| \text{ all } X_i \text{ with } i \in T}.$$
Denote the elements of $\set{1,\dots,k}\setminus T$ by $i_1< \dots < i_t$ (where $t = k - |T|$). Note that
$$f_{\subset T} = A_{i_1} A_{i_2} \dots A_{i_t} f \in \calF_{\alpha_{t}} \subset \calF_{\tilde\alpha},$$
by the definition of a filtered $A$-lattice.
Now by~(\ref{eq:ES-2}),
$$f_S = \sum_{T\subset S} (-1)^{|S\setminus T|} f_{\subset T}.$$
Since $\calF_{\tilde\alpha}$ is a lattice and $f_S$ is a linear combination, with integer coefficients, of functions $f_{\subset T}$ (all of which are in $\calF_{\tilde\alpha}$),
$f_S$ is in $\calF_{\tilde\alpha}$.
\end{proof}
Since the set of functions $\calF_{\tilde\alpha}$ is closed under addition, we get that for every General $A$-CSP($\calF_{\alpha}$, $\alpha$) instance $\calI$ with objective
function $\Phi$, all functions $\Phi_S$ are also in $\calF_{\tilde\alpha}$.
\begin{corollary}[cf. Corollary~\ref{cor:ES-objective}]
Consider a probability space $(\Omega,\mu)$ and a filtered $A$-lattice $\set{\calF_{\alpha}}$.
Let $\calI$ be an instance of General $A$-CSP($\calF_{\alpha}$, $\alpha_0$) and let $\Phi$ be its objective functions.
Then for every subset $S\subset \set{1,\dots, n}$ of size at most $k$, $\Phi_S\in_R\calF_{\tilde\alpha}$;
for every subset $S$ of size greater than $k$, $\Phi_S = 0$.

Furthermore, the Efron---Stein decomposition $\set{\Phi_S}$ of $\Phi$ can be computed in time $O(m)$.
\end{corollary}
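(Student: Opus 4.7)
The plan is to deduce the statement by combining the previous lemma (Efron--Stein decomposition of a single predicate lands in $\calF_{\tilde\alpha}$) with the linearity of the Efron--Stein decomposition and the closure of $\calF_{\tilde\alpha}$ under addition.

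First, I would write $\Phi = \sum_{j=1}^m f_j$, where each $f_j$ is a constraint depending on a set $V_j \subset \{1,\dots,n\}$ of at most $k$ variables with $f_j \in_R \calF_{\alpha_0}$. After the natural renaming that maps $V_j$ to $\{1,\dots,|V_j|\} \subset \{1,\dots,k\}$, Lemma~\ref{lem:ES-general-predicate} gives that for every $T \subset V_j$, the Efron--Stein piece $(f_j)_T$ lies in $\calF_{\tilde\alpha}$. For any $T \not\subset V_j$, property~1 of the Efron--Stein decomposition together with the fact that $f_j$ is constant in variables outside $V_j$ forces $(f_j)_T = 0$. In particular, $(f_j)_T = 0$ whenever $|T| > k$.

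Second, I would apply linearity of the Efron--Stein decomposition, which gives $\Phi_S = \sum_{j=1}^m (f_j)_S$ for every $S \subset \{1,\dots,n\}$. For $|S| > k$ every summand vanishes, so $\Phi_S = 0$. For $|S| \leq k$, only the $j$'s with $S \subset V_j$ contribute, and each nonzero term is an element of $\calF_{\tilde\alpha}$ after renaming the arguments indexed by $S$ to $\{1,\dots,|S|\}$. Since $\calF_{\tilde\alpha}$ is a lattice (hence closed under addition) and closed under permutations of its arguments (property~2 of an $A$-lattice), summing these terms — all with the same renamed arguments — keeps us inside $\calF_{\tilde\alpha}$, so $\Phi_S \in_R \calF_{\tilde\alpha}$.

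Third, for the running-time claim, I would describe the obvious streaming algorithm: iterate over the $m$ constraints, and for each $f_j$ compute its $2^k$ Efron--Stein components $(f_j)_T$ (for $T \subset V_j$) using repeated application of the averaging operators $A_i$ together with the signed sum in~(\ref{eq:ES-2}); each of these operations takes constant time in $\calF_{\tilde\alpha}$ by the standing assumption. Then accumulate $(f_j)_T$ into a running total indexed by $T$. The total work is $O(2^k m) = O(m)$, and only $O(2^k m)$ subsets $S$ ever receive a nonzero contribution, keeping the storage linear.

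The only subtle point (more bookkeeping than obstacle) is that the renaming used to place $(f_j)_S$ inside the fixed-arity lattice $\calF_{\tilde\alpha}$ must be consistent across different constraints that share the same index set $S$, so that the additions are carried out between genuine lattice elements. Property~2 of an $A$-lattice guarantees that any canonical choice of renaming (e.g., listing the elements of $S$ in increasing order and mapping them to $1,\dots,|S|$) is legitimate, which resolves this.
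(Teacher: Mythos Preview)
Your proposal is correct and follows essentially the same approach the paper indicates: the paper states the corollary immediately after the one-line remark ``Since the set of functions $\calF_{\tilde\alpha}$ is closed under addition, we get that for every General $A$-CSP($\calF_{\alpha}$, $\alpha$) instance $\calI$ with objective function $\Phi$, all functions $\Phi_S$ are also in $\calF_{\tilde\alpha}$,'' and gives no further proof. Your write-up simply unpacks this remark carefully (linearity of the Efron--Stein decomposition, vanishing of $(f_j)_T$ for $T\not\subset V_j$, closure of the lattice under addition, and the $O(2^k m)$ streaming computation), including the bookkeeping about consistent renamings that the paper leaves implicit.
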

\subsection{Compactness Properties of Filtered $A$-Lattices}
We now prove counterparts of Claim~\ref{claim:compact} and Lemma~\ref{lem:bound-on-C} for General A-CSP.
\begin{claim}[cf. Claim~\ref{claim:compact}]\label{claim:compact-A-CSP}
Consider a probability space $(\Omega,\mu)$. Let $\calF$ be an $A$-lattice of functions of arity $k$ on $\Omega$.
There exists a positive number $\beta$ such that for every function $f\in\calF$, $\E{f^2}=\E{f(X_1,\dots, X_k)^2} \geq \beta$.
\end{claim}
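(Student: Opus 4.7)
My plan is to prove this by exploiting two structural features of $\calF$: it lives in a finite--dimensional subspace of $L_\infty(\Omega^k)$, and as a lattice it is a discrete subset of that subspace. (I assume the intended statement is for nonzero $f\in\calF$, mirroring Claim~\ref{claim:compact}; the zero function trivially has $\E{f^2}=0$.)

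First I would fix a basis $v_1,\dots,v_r$ of the lattice $\calF$, so that $V=\spn(v_1,\dots,v_r)$ is a finite--dimensional subspace of $L_\infty(\Omega^k)$ and every $f\in\calF$ can be written uniquely as $f=\sum_{i=1}^r a_i v_i$ with $a_i\in\bbZ$. Since $\mu$ is a probability measure, we have the continuous inclusion $L_\infty(\Omega^k)\subset L_2(\Omega^k)$, and in both spaces equality means equality $\mu$-almost everywhere. Therefore the $v_i$, being linearly independent in $L_\infty$, remain linearly independent in $L_2$, and $\|\cdot\|_2$ restricts to a genuine norm on the finite--dimensional space $V$.

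Next I would invoke equivalence of norms on a finite--dimensional real vector space: the map $(a_1,\dots,a_r)\mapsto \bigl\|\sum a_i v_i\bigr\|_2$ is a norm on $\bbR^r$, so there exists $c>0$ with
\[
\Bigl\|\sum_{i=1}^r a_i v_i\Bigr\|_2 \;\geq\; c\,\max_i |a_i|
\]
for every $(a_1,\dots,a_r)\in\bbR^r$. Equivalently, on the unit sphere $\{a\in\bbR^r:\max_i|a_i|=1\}$, which is compact, the continuous function $a\mapsto \bigl\|\sum a_i v_i\bigr\|_2$ attains a positive minimum $c$.

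Finally I would use the integrality of lattice coefficients. If $f=\sum a_i v_i\in\calF$ is nonzero, then not all $a_i$ vanish, and since they are integers, $\max_i|a_i|\geq 1$. Combining this with the norm equivalence gives $\|f\|_2\geq c$, hence $\E{f^2}\geq c^2$, and we set $\beta:=c^2$.

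There is essentially no serious obstacle; the only point that requires a moment of care is verifying that $\|\cdot\|_2$ is actually a norm on $V$ (and not merely a seminorm), which is why I explicitly record that $L_\infty\subset L_2$ and that both spaces use the same $\mu$-a.e.\ equivalence, so linear independence transfers. The argument is the exact abstraction of Claim~\ref{claim:compact}: there the lattice was the integer polynomials of bounded degree and $V(f)=\E{\phi_{1,\dots,d}f^2}$ played the role of $\|f\|_2^2$; here $\calF$ is an arbitrary lattice in $L_\infty$ and we measure with the ordinary $L_2$ norm.
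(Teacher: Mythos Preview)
Your proof is correct and is essentially identical to the paper's: both fix a lattice basis $v_1,\dots,v_r$, use compactness of the set $\{a\in\bbR^r:\max_i|a_i|=1\}$ to get a positive lower bound $c$ on $\bigl\|\sum a_i v_i\bigr\|_2$ there, and then use integrality of the lattice coefficients to conclude $\|f\|_2\geq c$ for every nonzero $f\in\calF$. Your explicit remark that linear independence transfers from $L_\infty$ to $L_2$ (so that $\|\cdot\|_2$ is a genuine norm on $V$) makes rigorous a point the paper leaves implicit.
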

\begin{proof}
Let $f_1,\dots,f_r$ be the basis of lattice $\calF$. Consider the linear span $Q$ of functions $f_1,\dots, f_r$ (the set of all linear combinations with real coefficients).
Vector space $Q$ is finite dimensional. Let $Q_1 = \set{\sum a_i f_i: \max_i |a_i| = 1}$. Note that $Q_1$ is a compact set. All functions in $Q_1$ are non-zero (since
$f_1,\dots, f_r$ are linearly independent),
and, therefore, $\E{g^2} > 0$ for every $g\in Q_1$. Since $Q_1$ is compact,  $\min_{g\in Q_1} \E{g^2} = \inf_{g\in Q_1} \E{g^2} > 0$. Denote $\beta = \min_{g\in Q_1} \E{g^2}$.

Now consider a non-zero function $f\in\calF$. Write $f = \sum_{i=1}^r a_i f_i$. Let $M = \max_i |a_i|$. Since $f \neq 0$ and all coefficients $a_i$ are integer, $M \geq 1$.
Note that $f/M \in Q_1$. We have,
$$\E{f^2} = M^2\, \E{(f/M)^2} \geq \beta M^2 \geq \beta,$$
as required.
\end{proof}

\begin{lemma}[cf.~Lemma~\ref{lem:bound-on-C}]\label{lem:bound-on-C-A-CSP}
Consider a probability space $(\Omega,\mu)$. Let  $\calF$ be an $A$-lattice of functions of arity $k$ on $\Omega$.
There exists a constant $C$ such that the following holds. Let $f$ be a functions of arity at most $k$, which
depends on a subset of variables $\set{x_1,\dots, x_n}$. Assume that $f\in_R \calF$ (see Definition~\ref{def:A-lattice}). Then
$$\E{f^4} \leq C\,  \E{f^2}^{2}.$$
\end{lemma}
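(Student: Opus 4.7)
The plan is to mirror the compactness argument from Lemma~\ref{lem:bound-on-C}, adapted to the abstract $A$-lattice setting. Since $f \in_R \calF$, after renaming variables I may assume WLOG that $f$ is a function on $\Omega^k$ lying in $\calF$. Let $f_1,\dots,f_r$ be the lattice basis from Definition~\ref{def:A-lattice}, and let $Q \subset L_\infty(\Omega^k)$ denote its real linear span; this is a finite-dimensional subspace, and every $g \in Q$ is bounded (being a finite linear combination of bounded functions), so in particular $\E{g^4} < \infty$.

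First I would verify that $\|\cdot\|_2$ restricts to an honest norm on $Q$, not merely a seminorm. If $\|g\|_2 = 0$ for some $g = \sum_i a_i f_i \in Q$, then $g = 0$ in $L_2(\Omega^k)$, and hence the linear independence of $f_1,\dots,f_r$ forces all $a_i = 0$. (This is exactly the argument already used in Claim~\ref{claim:compact-A-CSP}.) Since $Q$ is finite-dimensional, all norms on $Q$ are equivalent, so the map $g \mapsto \E{g^4} = \|g\|_4^4$ is continuous with respect to $\|\cdot\|_2$.

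Next I would define $Q_1 = \{g \in Q : \|g\|_2 = 1\}$. This is the unit sphere of a norm on a finite-dimensional space, hence a compact set. The continuous function $g \mapsto \E{g^4}$ therefore attains its maximum on $Q_1$; let $C = \max_{g \in Q_1} \E{g^4}$. Crucially, $C$ depends only on $\calF$ (through the basis $f_1,\dots,f_r$ and the probability space $(\Omega,\mu)$), and not on $n$ or the instance.

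Finally, for any nonzero $f \in \calF \subset Q$, the function $g = f/\|f\|_2$ lies in $Q_1$, so
\[
\E{f^4} = \|f\|_2^4 \cdot \E{g^4} \leq C\, \|f\|_2^4 = C\, \E{f^2}^2,
\]
and the case $f = 0$ is trivial. The only real subtlety to watch for is the step that $\|\cdot\|_2$ is genuinely a norm on $Q$; once that is established, equivalence of norms in finite dimensions makes continuity of $\E{g^4}$ and compactness of $Q_1$ immediate, and no new ideas beyond those in Lemma~\ref{lem:bound-on-C} and Claim~\ref{claim:compact-A-CSP} are needed.
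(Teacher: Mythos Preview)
Your proposal is correct and follows essentially the same approach as the paper: reduce to $f\in\calF$, pass to the finite-dimensional linear span $\calQ$, use compactness of the $L_2$-unit sphere $\calQ_1$ and continuity of $g\mapsto\E{g^4}$ to obtain $C=\max_{g\in\calQ_1}\E{g^4}$, then normalize. You are a bit more explicit than the paper about why $\|\cdot\|_2$ is a genuine norm on $\calQ$ and why $\E{g^4}$ is continuous, but the argument is otherwise identical.
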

\begin{proof}
Since function $f$ depends on at most $k$ variables among $x_1,\dots, x_n$, we may assume without loss of generality
that it depends on a subset of $x_1,\dots, x_{k}$. Then $f\in \calF$.
Let $\calQ$ be the linear span of $\calF$.
Note that $\calQ$ is a finite dimensional vector space of functions.

Define $\calQ_1 =\set{h \in \calQ: \|h\|_2 = 1}$. Since $\calQ_1$ is
a compact set, the continuous
function
$$W(g) = \E{g^4}$$
is  bounded when $g \in\calQ_1$. Denote its maximum by $C$.

Letting $g = f / \|f\|_2$, we have,
$$\E{f^4} =
\E{g^4} \cdot \|f\|_2^4
 \leq C  \|f\|_2^4,$$
as required.
\end{proof}

\subsection{Variance of $A$-CSP Objective}
We now prove a counterpart of Theorem~\ref{thm:variance} for General $A$-CSP($\calF_{\alpha}$, $k$, $\alpha$).
\begin{lemma}[cf. Theorem~\ref{thm:variance}]\label{lem:variance-A-CSP}
Consider a probability space $(\Omega,\mu)$. Let  $\set{\calF_{\alpha}}$ be a filtered $A$-lattice and $\alpha_0$ be an integer.
There exists a number $\beta > 0$, which depends only on $\set{\calF_{\alpha}}$ and $\alpha_0$ such that the following holds.
Let $\cal I$ be an instance of General $A$-CSP($\calF_{\alpha}$, $\alpha_0$) and  $\sigma$ be a parameter.
Either $\cal I$ has a kernel on at most $(k / \beta) \sigma^2$ variables or $\Varr{\Phi} \geq \sigma^2$. Moreover,
there is an algorithm that either finds a kernel on at most $(k / \beta) \sigma^2$ variables or certifies that
$\Varr{\Phi} \geq \sigma^2$. The algorithm runs in time $O(m+n)$, where $n$ is the number of variables and $m$ is the number of constraints.
\end{lemma}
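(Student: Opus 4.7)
The plan is to mirror the proof of Theorem~\ref{thm:variance} exactly, replacing the ordering-CSP-specific input $\beta_k$ with a constant obtained from the compactness claim for $A$-lattices. First, I would identify the correct level of the filtration in which the Efron---Stein terms of $\Phi$ live. Given the arity bound $k$ and the base level $\alpha_0$, applying Lemma~\ref{lem:ES-general-predicate} to $\alpha_0$ produces some $\tilde\alpha$ (depending only on $\{\calF_{\alpha}\}$, $k$, and $\alpha_0$) such that every predicate $f \in \calF_{\alpha_0}$ has all of its Efron---Stein components in $\calF_{\tilde\alpha}$. Since $\calF_{\tilde\alpha}$ is closed under addition (it is a lattice), the corollary following Lemma~\ref{lem:ES-general-predicate} yields $\Phi_S \in_R \calF_{\tilde\alpha}$ for every non-empty $S$ of size at most $k$, and $\Phi_S = 0$ for $|S| > k$.

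Next, I would apply Claim~\ref{claim:compact-A-CSP} to the $A$-lattice $\calF_{\tilde\alpha}$ to obtain a constant $\beta > 0$, depending only on $\{\calF_{\alpha}\}$ and $\alpha_0$, such that $\E{g^2} \geq \beta$ for every non-zero $g \in \calF_{\tilde\alpha}$. Because the underlying space is a product of i.i.d.\ copies of $(\Omega,\mu)$ and $\calF$ is closed under permuting arguments (Definition~\ref{def:A-lattice}(2)), the bound also applies to any $f \in_R \calF_{\tilde\alpha}$: relabeling the $k$ active coordinates places $f$ in $\calF_{\tilde\alpha}$ without changing $\E{f^2}$. Consequently, for every non-empty $S$, either $\Phi_S = 0$ or $\E{\Phi_S^2} \geq \beta$.

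Now I would run the counting argument of Theorem~\ref{thm:variance}. Let $V' = \bigcup_{S:\Phi_S\neq 0}\{x_i : i\in S\}$; this is exactly the set of variables on which $\Phi$ non-trivially depends, and each $S$ with $\Phi_S \neq 0$ has $|S|\leq k$, so the number of non-empty such $S$ is at least $|V'|/k$. Because distinct Efron---Stein terms are orthogonal and have zero mean for $S\neq\varnothing$,
\begin{equation*}
\Varr{\Phi} = \sum_{S\neq\varnothing} \E{\Phi_S^2} \geq \frac{|V'|}{k}\,\beta.
\end{equation*}
Therefore, if $|V'| \geq (k/\beta)\sigma^2$, then $\Varr{\Phi} \geq \sigma^2$, and otherwise the restriction of $\calI$ to $V'$ is a kernel on fewer than $(k/\beta)\sigma^2$ variables with the same $OPT$ and $AVG$ as $\calI$ (since $\Phi$ depends only on $V'$).

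For the algorithmic version, by the corollary following Lemma~\ref{lem:ES-general-predicate} the decomposition $\{\Phi_S\}$ is computable in time $O(m)$ (assuming constant-time operations in $\calF_{\tilde\alpha}$ as stated in Theorem~\ref{thm:main-A-CSP}); then $V'$ is extracted by a single pass over the non-zero $\Phi_S$ in time $O(m+n)$, after which the algorithm compares $|V'|$ to the threshold $(k/\beta)\sigma^2$ and either outputs the kernel or certifies $\Varr{\Phi} \geq \sigma^2$. The only subtle point, and the main thing to verify carefully, is that the compactness constant $\beta$ truly depends on $\{\calF_{\alpha}\}$ and $\alpha_0$ alone and not on $\calI$ or $n$; this follows because $\tilde\alpha$ is determined by $\alpha_0$ via iterating the map $a(\cdot)$ at most $k$ times, and Claim~\ref{claim:compact-A-CSP} is applied once to the fixed lattice $\calF_{\tilde\alpha}$.
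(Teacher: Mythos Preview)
Your proposal is correct and follows essentially the same approach as the paper: obtain $\tilde\alpha$ from Lemma~\ref{lem:ES-general-predicate}, take $\beta$ from Claim~\ref{claim:compact-A-CSP} applied to $\calF_{\tilde\alpha}$, and then rerun the counting argument of Theorem~\ref{thm:variance} on $V' = \bigcup_{S:\Phi_S\neq 0}\{x_i:i\in S\}$ to conclude either $|V'| < (k/\beta)\sigma^2$ or $\Varr{\Phi}\geq\sigma^2$. Your additional remark about why the $\E{g^2}\geq\beta$ bound transfers through the $\in_R$ relabeling is a detail the paper leaves implicit, and your observation that $\beta$ depends only on $\{\calF_\alpha\}$ and $\alpha_0$ (via $\tilde\alpha$) matches the paper's brief note that $\beta$ does not depend on $n$ or $t$.
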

\begin{proof}
Let $\tilde\alpha$ be as in Lemma~\ref{lem:ES-general-predicate}. Since $\calF_{\tilde\alpha}$ is an $A$-lattice, by Claim~\ref{claim:compact-A-CSP},
there exists $\beta > 0$ such that $\|f\|^2_2 \geq \beta$ for every non-zero $f\in \calF_{\tilde\alpha}$.
Note that $\beta$ does not depend on $n$ and $t$.

Consider the Efron---Stein decomposition of $\Phi$. Let $V' = \bigcup_{S: \Phi_S \neq 0} \set{x_i:i\in S}$.
Function $\Phi$ depends only on variables in $V'$. Therefore, the restriction of $\cal I$ to variables in $V'$ is a kernel for $\cal I$. Let $\nu = |V'|$.
If $\nu < (k / \beta) \sigma^2$, then we are done. So let us assume that $\nu \geq (k / \beta) \sigma^2$.
There are at least $\nu / k$ non-empty sets $S$ with $\Phi_S \neq 0$ since each such set $S$ contributes at most $k$ variables to $V'$.
Note that $\E{\Phi_S} = 0$ for $S\neq \varnothing$ and hence $\Varr{\Phi_S} = \E{\Phi_S^2}$. Since $\Phi_S \in  \calF_{\tilde\alpha}$,
$\Varr{\Phi_S} = \E{\Phi_S^2} \geq \beta$, if $\Phi_S \neq 0$ and $S\neq \varnothing$. We have,
$$\Varr{\Phi} = \sum_S \Varr{\Phi_S}
\geq |\set{S\neq \varnothing:\Phi_S \neq 0}| \beta
\geq (\nu / k) \beta \geq \sigma^2,
$$
as required.

Note that we can compute the Efron---Stein decomposition of $\Phi$ in time $O(m+n)$ and then find the set $V'$ in time $O(m+n)$. If $|V'| < (k/\beta)\sigma^2$, we output the restriction of $\cal I$ to $V'$ (which we compute in time $O(m+n)$). Otherwise, we output that $\Varr{\Phi} \geq \sigma^2$.
\end{proof}

\subsection{Proof of Theorem~\ref{thm:main-A-CSP}}
We are ready to prove Theorem~\ref{thm:main-A-CSP}.
\smallskip
\begin{proof}
Let $\beta$ be as in Lemma~\ref{lem:variance-A-CSP} and $C$ be as in Lemma~\ref{lem:bound-on-C-A-CSP}.
Let $\sigma^2 = 4 \cdot 81^k C t^2$. Denote $$f= \Phi - \E{\Phi} = \Phi - AVG.$$ Note that $\esssup_{x\in\Omega^n} f(x_1,\dots, x_n) = MAX - AVG$.

By Lemma~\ref{lem:variance-A-CSP}, either $\cal I$ has a kernel on at most $(k/\beta) \sigma^2 = (4 \cdot 81^k \cdot k C/\beta) t^2 $ variables or
$\Varr{f} \geq \sigma^2$. In the former case, we output the kernel, and we are done. In the latter case, we show that $OPT - AVG \geq t$. Assume that $\Varr{f} \geq \sigma^2$. By
Theorem~\ref{thm:bonami-for-ES} and Lemma~\ref{lem:bound-on-C-A-CSP},
$\|f\|_4^4 \leq 81^k C \|f\|_2^2$.
By Theorem~\ref{thm:AGKSY}, $\Prob{f \geq \sigma / (2 \cdot 9^k \sqrt{C})} > 0$, and hence $MAX - AVG \geq \sigma / (2 \cdot 9^k \sqrt{C}) \geq t$.

The algorithm only executes the algorithm from Lemma~\ref{lem:variance-A-CSP}, so its running time is $O(m+n)$.
\end{proof}

\section{Piecewise Polynomial Predicates}\label{sec:PPP}
In this section, we present an interesting example of a filtered $A$-lattice, the set of piecewise polynomial functions.  As a corollary,
we get that  the problem of maximizing the objective over average for a CSP with piecewise polynomial functions is fixed-parameter tractable.

\begin{definition}
Let us say that a subset $P$ of $[-1,1]^k$ is $b$-polyhedral if it is defined by a set of linear inequities on $x_1, ..., x_k$, in which all coefficients are bounded by $b$ in absolute value.
 In other words, $P$ is a $b$-polyhedral set if for some $t$ there exist a $k \times t$ matrix $A$ and vector $c$ (with $t$ coordinates) such that
  $P=\set{x:Ax < c}$ (here, the inequality $Ax <c$ is understood coordinate-wise), and  every entry of $A$ and coordinate of $c$ is bounded by $b$ in absolute value.
 We denote the indicator function of a polyhedral set $P$ by $I_P$.
\end{definition}
\begin{definition}
We denote the set of polynomials $f(x_1,\dots, x_k)$ with real coefficients  of degree at most $d$ by $\bbR_{\leq d}[x_1,\dots,x_k]$;
we denote the set of polynomials  $f(x_1,\dots, x_k)$ with integer coefficients of degree at most $d$  by $\bbZ_{\leq d}[x_1,\dots,x_k]$.
\end{definition}
\begin{definition}
We say that a function $f(x_1,..., x_k): [-1,1]^k \to \bbR$ is piecewise polynomial on polyhedral sets or $(d,b)$-PPP if $f$ is the sum of terms of the form $g(x_1,..., x_k) I_P(x_1,...,x_k)$, where $g\in \bbZ_{\leq d}[x_1,\dots, x_k]$ and $P$ is a $b$-polyhedral set.
\end{definition}

We note that every $(d,b)$-PPP function can be written in the following ``canonical form''. Consider all hyperplanes in $\bbR^k$ of the form $\langle a, x\rangle = c$,
in which $a,c \in\set{-b,\dots,b}^k$. They partition $[-1,1]^d$ into polyhedrons. We call these polyhedrons elementary polyhedrons and denote the set of all
elementary polyhedrons by $\calP_{elem}$. Note that each $b$-polyhedral set is a union of elementary polyhedrons.
Thus we can write every $(d,b)$-PPP function $f$ as follows:
\begin{equation}
f(x_1,\dots, x_k) = \sum_{P\in \calP_{elem}} I_P(x_1,\dots,x_k) g_P(x_1,\dots,x_k), \label{eq:canonical-form}
\end{equation}
where $g_P \in \bbZ_{\leq d}[x_1,\dots,x_k]$.

\begin{theorem}\label{thm:PPP-is-FAL}
Let $\Omega = [-1,1]$ and $\mu$ be the uniform measure on $[-1,1]$.
Let
$$\calF_{\alpha} = \set{f: (\alpha! f) \text{ is an } (\alpha,\alpha)\text{-PPP function of variables } x_1,\dots, x_k}.$$
Then $\calF_{\alpha}$ is a filtered $A$-lattice of functions.
\end{theorem}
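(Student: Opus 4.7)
The plan is to verify in turn the three properties of a filtered $A$-lattice by exploiting the canonical form~(\ref{eq:canonical-form}). Fix $\alpha$. Only finitely many hyperplanes $\langle a,x\rangle=c$ with integer $a,c$ bounded by $\alpha$ in absolute value exist, so the set $\calP_{elem}$ of elementary polyhedra is finite, and the canonical form uniquely writes every $(\alpha,\alpha)$-PPP function (up to measure zero) as $\sum_{P\in\calP_{elem}} I_P g_P$ with $g_P \in \bbZ_{\leq\alpha}[x_1,\dots,x_k]$. Since $\bbZ_{\leq\alpha}[x_1,\dots,x_k]$ is a free $\bbZ$-module of finite rank, the $(\alpha,\alpha)$-PPP functions form a lattice in the finite-dimensional $\bbR$-span of the basis $\{I_P \cdot x^\mu\}_{P,\mu}$; dividing by $\alpha!$ preserves both the lattice and the span. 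Closure under permutation of arguments is immediate: any permutation sends elementary polyhedra to elementary polyhedra (preserving coefficient bounds) and integer-coefficient polynomials to integer-coefficient polynomials of the same degree. The filtration property is also straightforward: for $\alpha \leq \alpha'$, if $\alpha!\, f$ is $(\alpha,\alpha)$-PPP then $\alpha'!\, f = (\alpha'!/\alpha!)\cdot (\alpha!\, f)$ is an integer multiple of an $(\alpha,\alpha)$-PPP function and hence $(\alpha',\alpha')$-PPP, giving $\calF_\alpha \subset \calF_{\alpha'}$.

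The main work is closure under the averaging operator $A_i$. Given $f \in \calF_\alpha$ in canonical form $\alpha!\, f = \sum_P I_P g_P$, for each fixed $x' = (x_1,\dots,x_{i-1},x_{i+1},\dots,x_k)$ the slice $\{x_i : (x',x_i) \in P\}$ is an interval $[L_P(x'),U_P(x')]$ determined by the defining inequalities of $P$. I will partition $[-1,1]^{k-1}$ into finitely many \emph{super-elementary} polyhedra so that, within each region, the active upper and lower bounds on $x_i$ for every $P$ are given by fixed linear forms in $x'$. The required refinement is cut out by two families of hyperplanes: (a) those defining inequalities of the $P$'s that do not involve $x_i$ (integer coefficients bounded by $\alpha$), and (b) the equalities of competing bounds $(c_1 - \sum_{l \neq i} a^1_l x_l)/a^1_i = (c_2 - \sum_{l \neq i} a^2_l x_l)/a^2_i$, which after clearing denominators become linear inequalities in $x'$ with integer coefficients bounded by $2\alpha^2$.

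Within a super-elementary region, each active $U_P(x')$ has the form $(c - \sum a_l x_l)/a_i$ with $|a_i|\leq \alpha$. Let $G_P(x',x_i) = \int g_P(x',x_i)\,dx_i$ be an antiderivative in $x_i$; it has rational coefficients with denominators dividing $(\alpha+1)!$ and total degree at most $\alpha+1$. Substituting $U_P(x')$ and multiplying by $(\alpha+1)!\cdot a_i^{\alpha+1}$ turns $G_P(x',U_P(x'))$ into a polynomial of degree at most $\alpha+1$ in $x'$ with integer coefficients; the same argument applies to $L_P$. Summing the contributions $\tfrac{1}{2}[G_P(x',U_P(x')) - G_P(x',L_P(x'))]$ over all $P$ and all super-elementary pieces, there is an integer $N(\alpha)$, bounded by roughly $2(\alpha+1)!\cdot \alpha^{2(\alpha+1)}$, such that $N(\alpha)\cdot \alpha!\cdot A_i f$ is $(\alpha+1, 2\alpha^2)$-PPP on $[-1,1]^{k-1}$. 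Taking $a(\alpha)$ to be any integer with $a(\alpha) \geq \max(\alpha+1, 2\alpha^2)$ and with $a(\alpha)!/\alpha!$ divisible by $N(\alpha)$ then yields $A_i f \in \calF_{a(\alpha)}$.

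The main obstacle is the last two steps: the endpoints of integration are rational functions of $x'$, so substituting them into the antiderivative and then summing over pieces requires a uniform denominator clearing that simultaneously controls the degree, the integrality of the coefficients, and the coefficient bound of the new polyhedral partition. The quadratic blow-up $2\alpha^2$ appearing in part~(b) of the super-elementary partition is exactly what forces $a(\alpha)$ to grow super-polynomially in $\alpha$ rather than just linearly.
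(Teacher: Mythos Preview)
Your proposal is correct and follows essentially the same route as the paper: verify the lattice and filtration properties from the canonical form, then for the averaging operator partition the $x'$-space into polyhedral cells on which the active upper and lower linear bounds for $x_i$ are fixed, integrate explicitly between those bounds, and check that the result lands in some $\calF_{\alpha'}$.

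The only notable technical difference is in how the denominators are cleared. The paper first rescales every inequality involving $x_i$ so that the coefficient of $x_i$ becomes a common value $a=\binom{2\alpha}{\alpha}$ (chosen so that every integer $1,\dots,\alpha$ divides it); this makes all comparisons of bounds already integer-coefficient inequalities with coefficients at most $2\alpha a$, and the integration produces a single explicit denominator $2\,\alpha!\,a^{r_k+1}(r_k+1)$. You instead compare bounds in their original form and clear the cross-denominators pairwise, getting coefficient bound $2\alpha^2$ on the ``super-elementary'' partition but paying with a larger uniform multiplier $N(\alpha)$ at the integration step. Both bookkeeping choices lead to a valid $\alpha'=a(\alpha)$; the paper's rescaling trick gives a cleaner closed-form integral, while your approach yields a tighter polyhedral coefficient bound but pushes the complexity into $N(\alpha)$.
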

\begin{proof}
First, we prove  that each set $\calF_{\alpha}$ is an $A$-lattice.
It follows from (\ref{eq:canonical-form}) that $\calF_{\alpha}$ is a lattice with basis $I_P(x_1,\dots,x_k) g(x_1,\dots,x_k)/\alpha!$,
where $P\in\calP_{elem}$ and $g(x_1,\dots,x_k)$ is a monomial of degree at most $\alpha$ (i.e., $g$ is of the form $x_1^{r_1} \dots x_k^{r_k}$).
Since every monomial $g$ is bounded on $[-1,1]^k$, every basis function is bounded, and, therefore, all functions in $\calF_{\alpha}$ are bounded.
The definition of $\calF_{\alpha}$ is symmetric with respect to $x_1,\dots, x_k$, hence if we permute the arguments of any function $f\in \calF_{\alpha}$,
we get a function in $\calF_{\alpha}$.

Now we show that $\calF_{\alpha}$ is a filtered $A$-lattice. The inclusion $\calF_{\alpha} \subset \calF_{\alpha'}$ for $\alpha\leq \alpha'$ is immediate.
It remains to show that for every $\alpha$ there exists $\alpha'$ such that $A_i$ maps $\calF_{\alpha}$ to $\calF_{\alpha'}$.
Let $a  = \binom{2 \alpha}{\alpha}$ and $\alpha' = a^{\alpha+1}$ (we note that, in fact, we can choose a much smaller value of $\alpha'$;
 however, we use this value to simplify the exposition). Observe  that all integer numbers between $1$ and $\alpha$ divide $a$.

It is sufficient to prove that $A_i$ sends every basis function
$$f(x_1,\dots, x_k) = I_P(x_1,\dots,x_k) g(x_1,\dots,x_k)/\alpha!$$ to $\calF_{\alpha'}$.
Moreover, since the set of functions $\calF_{\alpha}$ is invariant under permutation of function arguments, we may assume without loss of generality that $i = k$.
Denote $g = x_1^{r_1} \dots x_k^{r_k}$, where $r_1 + \dots r_k \leq \alpha$.
Consider the set of linear inequalities $L$ that define polyhedron $P$. All coefficients in each of the inequalities are bounded by $\alpha$ in absolute value.
Let $L_0$ be those inequalities that do not depend on $x_k$ and $L_1$ be those that do depend on $x_k$.
We rewrite every inequality in $L_1$ as follows. Consider an inequality in $L_1$. Let $\lambda$ be the coefficient of $x_k$ in it.
We multiply the inequality by $a/\lambda \in \bbZ$, and if $\lambda < 0$, we change the comparison sign in the inequality to the opposite.
Finally, we move all terms in the inequality other than $a x_k$ to the right hand side.
We get an equivalent inequality of the form either $a x_k > l(1,x_1,\dots, x_{k-1})$ or $a x_k < u(1,x_1,\dots, x_{k-1})$,
where $l$ and $u$ are linear functions with integer coefficients bounded by $\alpha a$ in absolute value.
Denote the inequalities of the form $a x_k > l(1,x_1,\dots, x_{k-1})$ by
$$a x_k > l_1(1,x_1,\dots, x_{k-1}), a x_k > l_2(1,x_1,\dots, x_{k-1}),\dots, a x_k > l_p(1,x_1,\dots, x_{k-1}),$$
and the inequalities of the form $a x_k < u(1,x_1,\dots, x_{k-1})$ by
$$a x_k < u_1(1,x_1,\dots, x_{k-1}), a x_k < u_2(1,x_1,\dots, x_{k-1}),\dots, a x_k < u_q(1,x_1,\dots, x_{k-1}).$$

Let $M_0$ be the set of points $x = (x_1,\dots, x_k)$ such that
$l_{j'} (x_1,\dots, x_k) = l_{j''} (x_1,\dots, x_k)$ or $u_{j'} (x_1,\dots, x_k) = u_{j''} (x_1,\dots, x_k)$
for some $j' \neq j''$. Note that $M_0$ has measure $0$.

Define $pq$ polyhedrons $P_{j_1j_2}$ in $\bbR^{k-1}$. For $j_1\in\set{1,\dots,p}$ and $j_2\in\set{1,\dots,q}$, let $P_{j_1j_2}$ be the
polyhedron defined by the following inequalities:
\begin{enumerate}
\item all inequalities in $L_0$,
\item inequality $l_{j_1}(1,x_1,\dots,x_{k-1}) < u_{j_2}(1,x_1,\dots,x_{k-1})$,
\item inequalities $l_j(1,x_1,\dots,x_{k-1}) < l_{j_1}(1,x_1,\dots,x_{k-1})$ for every $j\neq j_1$,
\item inequalities $u_j(1,x_1,\dots,x_{k-1}) > u_{j_2}(1,x_1,\dots,x_{k-1})$ for every $j\neq j_2$.
\end{enumerate}
Inequalities in items 2--4 are equivalent to the following condition (except for points in $M_0$):
\begin{equation}\label{ineq:Pjj-constraints}
\max_j l_j(1,x_1,\dots,x_{k-1}) = l_{j_1}(1,x_1,\dots,x_{k-1}) < u_{j_2}(1,x_1,\dots,x_{k-1}) = \min_j u_{j}(1,x_1,\dots,x_{k-1}).
\end{equation}

Note that if $(x_1,\dots, x_k)\in P\setminus M_0$ then $(x_1,\dots,x_{k-1}) \in P_{j_1j_2}$ for
$$j_1 = \argmax_j  l_{j}(1,x_1,\dots,x_{k-1}) \text{ and } j_2 = \argmin_j  u_{j}(1,x_1,\dots,x_{k-1}).$$
Also note that polyhedrons $P_{j_1j_2}$ are disjoint.
Now let
$$h_{j_1j_2}(x_1,\dots, x_{k-1}) = \frac{1}{2\alpha! a^{r_k+1} (r_k+1)} (u_{j_2}(1,x_1,\dots,x_{k-1})^{r_k+1} - l_{j_1}(1,x_1,\dots,x_{k-1})^{r_k+1}) x_1^{r_1} \dots x_{k-1}^{r_{k-1}}.$$

Let $x' = (x_1,\dots, x_{k-1})$ be a point in $[-1,1]^{k-1}$. Consider two cases.
\medskip

\paragraph{Case 1} First, assume that $x'\in P_{j_1j_2}$ for some $j_1$ and $j_2$. Then
$$A_kf(x') = \frac{1}{2\alpha!}\int_{-1}^1 I_P(x_1,\dots,x_k) g(x_1,\dots,x_k) dx_k.$$
The point $(x_1,\dots, x_k)$ satisfies all inequalities in $L_0$ since $x'\in P_{j_1j_2}$.
Hence, $I_P(x_1,\dots,x_k) = 1$ if and only if it satisfies all inequalities in $L_1$, which are equivalent to
$$\max_j l_j(1,x_1,\dots,x_{k-1}) \leq ax_k \leq \min_j u_j(1,x_1,\dots,x_{k-1}).$$
Combining this with~(\ref{ineq:Pjj-constraints}), we get that $I_P(x_1,\dots,x_{k-1}) = 1$ if and only if
$$x_k \in \left[\frac{1}{a}l_{j_1}(1,x_1,\dots,x_{k-1}) , \frac{1}{a} u_{j_2}(1,x_1,\dots,x_{k-1})\right].$$
Therefore,
\begin{align*}
A_kf(x') &= \frac{1}{2\alpha!}\int_{l_{j_1}(1,x_1,\dots,x_{k-1}) /a}^{u_{j_2}(1,x_1,\dots,x_{k-1}) /a}  x_1^{r_1} \dots x_k^{r_k} dx_k\\
              &= \frac{1}{2\alpha! a^{r_k+1} (r_k+1)} (u_{j_2}(1,x_1,\dots,x_{k-1})^{r_k+1} - l_{j_1}(1,x_1,\dots,x_{k-1})^{r_k+1}) x_1^{r_1} \dots x_{k-1}^{r_{k-1}}\\
              &= h_{j_1j_2}(x').
\end{align*}

\paragraph{Case 2} Now assume that $x'\notin P_{j_1j_2}$ for every $j_1$ and $j_2$.  Then there is no $x_k$ such that $(x_1,\dots,x_k) \in P\setminus M_0$. Therefore,
$$A_kf(x') = \frac{1}{2\alpha!}\int_{-1}^1 I_P(x_1,\dots,x_k) g(x_1,\dots,x_k) dx_k = \frac{1}{2\alpha!} \int_{-1}^1 0 \cdot g(x_1,\dots,x_k)dx_k = 0.$$
(The equality holds on a set of full measure; see Remark~\ref{rem:measure-0}.)

We conclude that
$$A_kf(x') = \sum_{j_1,j_2} I_{P_{j_1j_2}} h_{j_1j_2}(x').$$
All coefficients in the inequalities that define $P_{j_1j_2}$ are bounded by $2\alpha a$ in absolute value, and
$2\alpha! a^{r_k+1} (r_k+1) h(x') \in \bbZ_{\leq \alpha + 1}$. Therefore,
$2\alpha! a^{r_k+1} (r_k+1) A_kf$ is an $(\alpha+1,2\alpha a)$-PPP function. Thus $A_k f\in\calF_{\alpha'}$.
\end{proof}

From Theorems~\ref{thm:main-A-CSP} and~\ref{thm:PPP-is-FAL}, we get the following corollary.
\begin{corollary}
For every $k$, $d$ and $b$, there is an algorithm that given an instance of a constraint satisfaction problem on $n$ variables $x_1,\dots,x_n$
with $m$ real-valued constraints, each of which is a $(d,b)$-PPP function of arity $k$, and a parameter $t$,
either finds a kernel on at most $\kappa t^2$ variables or certifies that $OPT \geq  AVG + t$.
The algorithm runs in time $O(m+n)$. (The coefficient $\kappa$ and the coefficient in the $O$-notation depend only on $k$, $d$, and $b$).
\end{corollary}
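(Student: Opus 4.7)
The plan is to derive this corollary directly from Theorem~\ref{thm:main-A-CSP} applied to the filtered $A$-lattice $\{\calF_\alpha\}$ of Theorem~\ref{thm:PPP-is-FAL} (with $\Omega=[-1,1]$ and $\mu$ the uniform measure). The only real content is to verify that the given instance fits into the General $A$-CSP framework for some $\alpha_0$ depending only on $k,d,b$.

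First, I would choose $\alpha_0 = \max(d,b)$. If $f$ is a $(d,b)$-PPP function of arity $k$, then by definition it is a sum of terms $g\cdot I_P$ with $g\in\bbZ_{\leq d}[x_1,\dots,x_k]$ and $P$ a $b$-polyhedral set. Since $\alpha_0\geq d$ we have $g\in\bbZ_{\leq \alpha_0}[x_1,\dots,x_k]$, and since $\alpha_0\geq b$ every $b$-polyhedral set is $\alpha_0$-polyhedral, so $f$ is $(\alpha_0,\alpha_0)$-PPP. Multiplying by the positive integer $\alpha_0!$ preserves both the integer-coefficient and bounded-coefficient conditions, hence $\alpha_0! f$ is still $(\alpha_0,\alpha_0)$-PPP, i.e.\ $f\in_R\calF_{\alpha_0}$. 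So the input is a legitimate instance of General $A$-CSP($\calF_\alpha,\alpha_0$).

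Second, I would invoke Theorem~\ref{thm:main-A-CSP}. That theorem, applied to this instance, produces in time $O(m+n)$ either a kernel on at most $\kappa t^2$ variables or a certificate that $OPT\geq AVG+t$, where $\kappa$ depends only on the lattice $\{\calF_\alpha\}$ and on $k,\alpha_0$, hence only on $k,d,b$, as required by the statement.

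The one thing to check is the constant-time assumption in Theorem~\ref{thm:main-A-CSP}: that summation in $\calF_\alpha$ and application of the averaging operators $A_i$ cost $O(1)$ for the relevant $\alpha$. This is immediate from the canonical representation~(\ref{eq:canonical-form}): every function in $\calF_\alpha$ is encoded by its list of coefficients on the finite basis $\{I_P(x_1,\dots,x_k)\,g(x_1,\dots,x_k)/\alpha!\}$ indexed by $P\in\calP_{\mathrm{elem}}$ and monomials $g$ of degree at most $\alpha$, and this basis has size depending only on $k,\alpha$. Addition is then coordinatewise, and the explicit formula derived in the proof of Theorem~\ref{thm:PPP-is-FAL} shows that $A_i$ lands in $\calF_{\alpha'}$ for an $\alpha'$ bounded in terms of $\alpha$, with the output coefficients computable in constant time from those of the input. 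Since $\alpha_0$ is bounded in terms of $k,d,b$, so is every $\alpha$ that arises in the decomposition, and all per-step costs are $O_{k,d,b}(1)$. The only potential obstacle is keeping track of how $\alpha$ grows under iterated averaging $A_{i_1}\cdots A_{i_t}$ in the Efron--Stein decomposition, but Lemma~\ref{lem:ES-general-predicate} already bounds this by a fixed $\tilde\alpha$ depending only on $\alpha_0$ and $k$, so no extra work is needed.
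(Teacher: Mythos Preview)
Your proposal is correct and follows essentially the same approach as the paper: set $\alpha_0=\max(d,b)$, observe that each $(d,b)$-PPP constraint lies in $\calF_{\alpha_0}$, and invoke Theorem~\ref{thm:main-A-CSP} for the filtered $A$-lattice of Theorem~\ref{thm:PPP-is-FAL}. The paper's proof is a two-line version of yours; your additional verification that $f\in_R\calF_{\alpha_0}$ and your discussion of the constant-time assumption are just spelling out what the paper leaves implicit.
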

\begin{proof}
Let $\alpha = \max(d, b)$.
We apply Theorem~\ref{thm:main-A-CSP} to filtered $A$-lattice $\calF_{\alpha}$ from Theorem~\ref{thm:PPP-is-FAL} and get the corollary.
\end{proof}

\smallskip
Since every constraint in a  $(k,b)$-LP CSP problem is a $(0,b)$-PPP function of arity $k$ (see Definition~\ref{def:LP-CSP}), we get the following corollary.
\begin{corollary}
For every $k$ and $b$, there is an algorithm that given an instance of $(k,b)$-LP CSP
either finds a kernel on at most $\kappa t^2$ variables
or certifies that $OPT \geq  AVG + t$.
The algorithm runs in time $O(m+n)$ (The coefficient $\kappa$ and the coefficient in the $O$-notation depend only on $k$ and $b$).
\end{corollary}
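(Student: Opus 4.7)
The plan is to derive this corollary directly from the preceding corollary (about $(d,b)$-PPP functions) together with the hint already given in the text: every $(k,b)$-LP CSP constraint can be represented as a $(0,b)$-PPP function of arity $k$. So the main task is to make that representation explicit.

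First, I would reduce the instance of $(k,b)$-LP CSP to an instance whose constraints are real-valued $(0,b)$-PPP functions of arity $k$. Fix a constraint $\pi\in\Pi$, which is a disjunction of clauses $C_1,\dots,C_r$, where each clause $C_j$ is a system $A_j x < c_j$ with integer coefficients bounded by $b$; after restricting attention to the $k$-element set $V_\pi$, each $C_j$ cuts out a $b$-polyhedral set $P_j\subset[-1,1]^k$. The indicator of $\pi$ is the indicator of $P_1\cup\dots\cup P_r$. By inclusion-exclusion,
\begin{equation*}
I_{P_1\cup\dots\cup P_r}(x) \;=\; \sum_{\varnothing\neq S\subset\{1,\dots,r\}} (-1)^{|S|+1}\, I_{\bigcap_{j\in S} P_j}(x).
\end{equation*}
Each intersection $\bigcap_{j\in S}P_j$ is itself $b$-polyhedral (its defining inequalities are just the union of the defining inequalities of the $P_j$'s, whose coefficients still lie in $[-b,b]$), so with $g\equiv\pm 1\in\bbZ_{\leq 0}[x_1,\dots,x_k]$ this is exactly a $(0,b)$-PPP function on the relevant $k$ coordinates. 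Hence each constraint of the $(k,b)$-LP CSP instance is a $(0,b)$-PPP function of arity $k$, and the objective $\Phi$ is their sum.

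Next, I would observe that the $OPT$ and $AVG$ of the $(k,b)$-LP CSP agree with those of the associated PPP-CSP instance. For $AVG$ this is immediate: both are expectations of the same real-valued function under the uniform measure on $[-1,1]^n$. For $OPT$, the $(k,b)$-LP CSP definition asks for distinct real values in $[-1,1]$, and its value is invariant under small perturbations that keep us in the interior of the active polyhedral cells; since the boundaries of the $b$-polyhedral sets have Lebesgue measure zero, the essential supremum of $\Phi$ (used for PPP-CSP) matches the maximum number of satisfied constraints in the LP-CSP sense.

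Then I would simply invoke the previous corollary with $d=0$ and the given $b$: it produces an algorithm that runs in time $O(m+n)$ and either outputs a kernel on at most $\kappa t^2$ variables (with $\kappa$ depending only on $k$ and $b$) or certifies $OPT\geq AVG+t$. The only very minor obstacle here is making sure the PPP form of each constraint can be constructed in $O(1)$ time (so that the whole reduction is $O(m+n)$); this is fine because $k$, $b$, and hence the number of clauses needed in the inclusion-exclusion for any single constraint, are all bounded in terms of the fixed parameters $k$ and $b$. Everything else is handed to us by Theorem~\ref{thm:main-A-CSP} via Theorem~\ref{thm:PPP-is-FAL}.
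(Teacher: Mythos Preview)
Your proposal is correct and follows the same route as the paper: the paper simply states that every $(k,b)$-LP CSP constraint is a $(0,b)$-PPP function of arity $k$ and invokes the preceding corollary, without any further proof. Your inclusion--exclusion argument and the discussion of why $OPT$ and $AVG$ agree just make explicit the details the paper leaves implicit (the latter is addressed in the Remark following the corollary).
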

\begin{remark}
Note that for an instance of $(k,b)$-LP CSP, we have
$$OPT = \max_{x_1,\dots,x_n\in[-1,1]} \Phi(x_1,\dots, x_n) =  \esssup_{x_1,\dots,x_n\in[-1,1]} \Phi(x_1,\dots, x_n),$$
since all LP constraints are strict. If we were to use non-strict ``less-than-or-equal-to'' and ``greater-than-or-equal-to'' LP constraints, we would
have to define
$OPT$ as  $\esssup_{x_1,\dots,x_n\in[-1,1]} \Phi(x_1,\dots, x_n)$, and not as $\max_{x_1,\dots,x_n\in[-1,1]} \Phi(x_1,\dots, x_n)$,
since, in general, $\esssup_{x_1,\dots,x_n\in[-1,1]} \Phi(x_1,\dots, x_n)$ might not be equal to $\max_{x_1,\dots,x_n\in[-1,1]} \Phi(x_1,\dots, x_n)$.
For example, consider an instance of $(2,1)$-LP CSP with two constraints $x_1 \leq x_2$ and $x_2 \leq x_1$; we have
$$OPT = \esssup_{x_1,\dots,x_n\in[-1,1]} \Phi(x_1,\dots, x_n) = 1,$$
but
$$\max_{x_1,\dots,x_n\in[-1,1]} \Phi(x_1,\dots, x_n) = 2.$$
(The maximum is attained on a set of measure $0$, where $x_1 = x_2$.)
\end{remark}

\section*{Acknowledgement}
The authors thank Matthias Mnich for valuable comments. Yury Makarychev was supported by NSF CAREER award CCF-1150062 and NSF award IIS-1302662.

\appendix

\section{Rankwise independent permutations}\label{sec:rankwise}
In this section, we prove the following lemma.

\begin{lemma}\label{lem:rankwise}
If $\tilde{\alpha}$ is a random $4k$ rankwise independent permutation and $|V'|\geq \kappa_k t^2$, then for some $\alpha^*$ in the
support of $\tilde{\alpha}$, $\val_{\calI}(\alpha^*)\geq AVG + t$.
\end{lemma}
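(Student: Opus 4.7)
The plan is to show that all relevant moments of $\val_{\calI}(\tilde{\alpha})$ under the $4k$-rankwise independent distribution coincide with the corresponding moments of $\Phi$ under the uniform continuous distribution on $[-1,1]^n$, so that we can rerun the proof of Theorem~\ref{thm:main} with $\tilde{\alpha}$ in place of a uniform continuous assignment.

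First I would observe that $\val_{\calI}(\alpha) = \sum_{\pi \in \Pi} I_\pi(\alpha)$, where each indicator $I_\pi$ depends only on the relative order of the at most $k$ variables in $V_\pi$. Consequently, for any fixed $r \in \{1,2,4\}$, the product $\prod_{j=1}^r I_{\pi_j}(\alpha)$ depends only on the relative order of at most $rk$ variables. Since $\tilde{\alpha}$ is $4k$-rankwise independent, any such product of at most four indicators has the same expectation under $\tilde{\alpha}$ as under a uniformly random permutation of $\{1,\dots,n\}$, and hence the same expectation as under a uniform continuous assignment in $[-1,1]^n$ (which induces a uniform permutation). Expanding $\val_{\calI}(\tilde{\alpha})$, $\val_{\calI}(\tilde{\alpha})^2$, and $(\val_{\calI}(\tilde{\alpha}) - AVG)^4$ term by term, I conclude that
\begin{align*}
\E_{\tilde{\alpha}}\bigl[\val_{\calI}(\tilde{\alpha})\bigr] &= \E[\Phi] = AVG, \\
\E_{\tilde{\alpha}}\bigl[(\val_{\calI}(\tilde{\alpha}) - AVG)^2\bigr] &= \Varr{\Phi}, \\
\E_{\tilde{\alpha}}\bigl[(\val_{\calI}(\tilde{\alpha}) - AVG)^4\bigr] &= \E\bigl[(\Phi - \E\Phi)^4\bigr].
\end{align*}

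Now I would invoke the bounds from the proof of Theorem~\ref{thm:main}: since $|V'|\geq \kappa_k t^2$, Theorem~\ref{thm:variance} gives $\Varr{\Phi} \geq |V'|\beta_k/k \geq \kappa_k t^2 \beta_k / k$, while Lemma~\ref{lem:bound-on-C} together with Theorem~\ref{thm:bonami-for-ES} yields $\|\Phi - \E\Phi\|_4^4 \leq 81^k C_k \|\Phi - \E\Phi\|_2^4$. By the moment-matching above, the random variable $X := \val_{\calI}(\tilde{\alpha}) - AVG$ satisfies $\E X = 0$, $\E X^2 = \sigma^2 \geq \kappa_k t^2 \beta_k/k$, and $\E X^4 \leq 81^k C_k \sigma^4$. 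Applying Theorem~\ref{thm:AGKSY} with $b = 81^k C_k$ gives $\Pr_{\tilde{\alpha}}(X \geq \sigma/(2\sqrt{b})) > 0$, and by the choice of $\kappa_k$ (exactly as in the proof of Theorem~\ref{thm:main}), $\sigma/(2\sqrt{b}) \geq t$. Hence some permutation $\alpha^*$ in the support of $\tilde{\alpha}$ satisfies $\val_{\calI}(\alpha^*) \geq AVG + t$.

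The only non-routine step is verifying that $4k$-rankwise independence is the right order of independence for matching the fourth moment; the main obstacle is therefore to carefully account for the number of distinct variables that appear in a product of four indicator functions $I_{\pi_1}I_{\pi_2}I_{\pi_3}I_{\pi_4}$, which can be at most $4k$ since each $\pi_j$ depends on at most $k$ variables. Everything else reduces to re-reading the proof of Theorem~\ref{thm:main} with $\tilde{\alpha}$ in place of the uniform continuous distribution.
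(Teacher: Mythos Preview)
Your proposal is correct and follows essentially the same approach as the paper: both arguments match the first, second, and fourth moments of $\val_{\calI}(\tilde{\alpha})$ to those of $\Phi$ by noting that any product of at most four constraint indicators depends on the relative order of at most $4k$ variables, and then invoke Theorem~\ref{thm:variance}, Lemma~\ref{lem:bound-on-C}, Theorem~\ref{thm:bonami-for-ES}, and Theorem~\ref{thm:AGKSY} exactly as in the proof of Theorem~\ref{thm:main}. If anything, your write-up is slightly more explicit than the paper's about why $4k$-rankwise independence is the right order.
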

\begin{proof}
Let $\alpha$ be a permutation uniformly distributed among all $n!$ permutations. The random variables $\val_{\calI}(\alpha)$ and $\Phi$
are identically distributed. Hence,
$$\Var[\val_{\calI}(\alpha)] = \Var[\Phi],\;\;\; \text{and}\;\;\;
\|\val_{\calI}(\alpha)- AVG\|_4 = \|\Phi - AVG\|_4.$$ Observe, that
$$\Var[\val_{\calI}(\alpha^*)] = \Var[\val_{\calI}(\alpha)],\;\;\; \text{and}\;\;\;
\|\val_{\calI}(\alpha)- AVG\|_4 = \|\val_{\calI}(\alpha^*)- AVG\|_4,$$
since for every four predicates $\pi_1,\pi_2,\pi_3,\pi_4\in \Pi$, we have
\begin{eqnarray*}
\E{\pi_1(\alpha^*)\pi_2(\alpha^*)} &=& \E{\pi_1(\alpha)\pi_2(\alpha)};\\
\E{\pi_1(\alpha^*)\pi_2(\alpha^*)\pi_3(\alpha^*)\pi_4(\alpha^*)} &=& \E{\pi_1(\alpha)\pi_2(\alpha)\pi_3(\alpha)\pi_4(\alpha)}.
\end{eqnarray*}
Hence, as in Theorem~\ref{thm:main}, $\Var[\val_{\calI}(\alpha^*)]\geq |V'|\beta_k/k$ and
$\|\val_{\calI}(\alpha^*) - AVG\|_4^4 \leq 81^k C_k \|\val_{\calI}(\alpha^*) - AVG\|_2^4$.
Consequently, by Theorem~\ref{thm:AGKSY}, $\Pr (\val_{\calI}(\alpha^*)\geq AVG + t) > 0$.
This concludes the proof.
\end{proof}

\end{document}